\documentclass[aps,prx,superscriptaddress,twocolumn,twoside,nofootinbib,floatfix,a4paper]{revtex4-2}

\usepackage{graphicx}
\usepackage{times}
\usepackage{physics}
\usepackage{amssymb,amsthm}
\usepackage{textcomp}
\usepackage{mathtools}
\usepackage[utf8]{inputenc}
\usepackage[polish,english]{babel}
\usepackage[shortlabels]{enumitem}
\usepackage[T1]{fontenc}
\usepackage{xcolor}
\usepackage{dcolumn}
\usepackage{bm}
\usepackage{verbatim}
\usepackage{amsmath}
\usepackage{dsfont,bbm}
\usepackage{pifont}
\usepackage{slashed}
\usepackage{orcidlink}

\theoremstyle{plain}

\newtheorem{theorem}{Theorem}

\newtheorem{definition}{Definition}

\usepackage{hyperref}
\hypersetup{
    colorlinks,
    citecolor=blue,
    filecolor=black,
    linkcolor=blue,
    urlcolor=blue
}
\usepackage{cleveref}

\usepackage[normalem]{ulem}
\newcommand{\stkout}[1]{\ifmmode\text{\sout{\ensuremath{#1}}}\else\sout{#1}\fi}

\let\existstemp\exists
\let\foralltemp\forall
\renewcommand*{\exists}{\existstemp\mkern2mu}
\renewcommand*{\forall}{\foralltemp\mkern2mu}

\begin{document}

\title{Device-independent certification of tripartite quantum networks with bilocal Bell inequalities}

\author{Patryk Michalski\orcidlink{0009-0009-0305-7356}}
\email{pmichalski@cft.edu.pl}
\affiliation{Center for Quantum-Enabled Computing, Center for Theoretical Physics, Polish Academy of Sciences, al. Lotnik\'{o}w 32/46, 02-668 Warsaw, Poland}
\affiliation{Institute of Theoretical Physics, University of Warsaw, Pasteura 5, 02-093 Warsaw, Poland}
\author{Arturo Konderak\orcidlink{0000-0002-4546-2626}}
\email{akonderak@cft.edu.pl}
\affiliation{Center for Quantum-Enabled Computing, Center for Theoretical Physics, Polish Academy of Sciences, al. Lotnik\'{o}w 32/46, 02-668 Warsaw, Poland}
\author{Remigiusz Augusiak\orcidlink{0000-0003-1154-6132}}
\email{augusiak@cft.edu.pl}
\affiliation{Center for Quantum-Enabled Computing, Center for Theoretical Physics, Polish Academy of Sciences, al. Lotnik\'{o}w 32/46, 02-668 Warsaw, Poland}

\begin{abstract}
While quantum networks have been extensively studied as a natural extension of the standard Bell scenario with a richer correlation structure, general constructions of nonlinear Bell inequalities with self-testing properties are still largely lacking. In this work, we present a general method for constructing such inequalities in the simplest network scenario, in which two independent sources distribute bipartite quantum states to three spatially separated observers. These inequalities allow for arbitrary numbers of binary measurements and are maximally violated by maximally entangled states of the corresponding local dimensions together with sets of pairwise anticommuting Clifford observables. Importantly, their maximal quantum values can be determined analytically, which makes them particularly promising for device-independent applications. In particular, we prove that these Bell inequalities can be used to device-independently certify the underlying quantum network, including both the quantum states produced by the sources and the observables measured by all parties. To the best of our knowledge, this is the first self-testing result for quantum networks
that relies solely on the maximal violation of a nonlocality witness.
\end{abstract}

\maketitle

\textit{Introduction.}---Quantum networks have emerged as a key subject in the study of Bell nonlocality, since they offer a natural setting for modeling realistic scenarios involving multiple spatially separated observers. While standard Bell scenarios assume a single common source, network configurations allow correlations to be distributed by multiple independent sources~\cite{Tavakoli2022}. Consequently, nonlocal correlations in networks exhibit a richer structure, as source independence fundamentally alters the geometry of both classical and quantum sets of correlations. Even in the simplest scenario involving two independent sources and three parties---the so-called bilocal scenario---the source independence assumption significantly complicates the derivation and analysis of optimal nonlocality witnesses, as it renders the set of local correlations non-convex. This naturally motivates the introduction of nonlinear Bell inequalities~\cite{Fritz2012}. 

The first paradigmatic example of such a nonlinear Bell inequality is the Branciard--Rosset--Gisin--Pironio (BRGP) inequality for the bilocal scenario~\cite{Branciard2010,Branciard2012}, which was subsequently extended to broader network topologies and measurement structures. These include general network architectures~\cite{Rosset2016,Tavakoli2016,Hsu2021,Luo2024}, star and chain networks~\cite{Tavakoli2014,Mukherjee2015,Luo2024}, and setups tailored to the elegant joint measurement~\cite{Gisin2019,Tavakoli2021}. Simultaneously, several constructions of linear Bell inequalities for quantum networks have also been proposed~\cite{Luo2022,Luo2024}. However, most existing constructions are restricted either to scenarios with only two measurements for the external parties or to qubit systems. In fact, Bell inequalities for quantum networks involving more than two measurement settings per party, as well as those tailored to quantum systems with higher local dimensions, remain largely unexplored. Moreover, Bell inequalities capable of self-testing quantum networks are still lacking, even in the simplest bilocal setup.

\begin{figure}[ht]
    \centering
    \includegraphics[width=\linewidth]{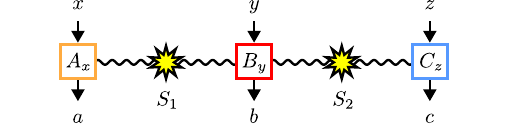}
    \caption{Schematic representation of the bilocal (or three-on-a-line) network with two independent sources, $S_1$ and $S_2$, distributing bipartite correlations among three spatially separated observers, Alice, Bob, and Charlie. The observers choose their measurements freely and independently, labeled by $x$, $y$, and $z$, and obtain outcomes $a$, $b$, and $c$, respectively.}
    \label{fig:network}
\end{figure}

In this work, we address the aforementioned issues by developing a general framework for nonlinear Bell inequalities tailored to tripartite quantum networks with two independent sources (cf. Fig.~\ref{fig:network}). These inequalities involve arbitrary numbers of binary measurements for the external and central parties. Crucially, they exhibit maximal violation for maximally entangled states of suitable local dimensions emitted by the sources, together with sets of pairwise anticommuting Clifford observables measured by the external parties. 
Beyond their importance in foundational physics~\cite{Araki2003,Derezinski2023,Szalay2021}, such observables are highly relevant in topological quantum computation~\cite{Lahtinen2017,Karzig2017}, uncertainty relations \cite{WehnerWinter2008}, and quantum information protocols involving steering inequalities with unbounded quantum violations \cite{Rutkowski2015}.

Our construction is based on a broad family of real row-orthogonal and column-normalized (ROCN) matrices~\cite{Michalski2025}. Importantly, the algebraic properties of ROCN matrices enable the analytical derivation of the maximal quantum values for the resulting nonlinear Bell inequalities---a feature essential for device-independent applications. We show that the maximal quantum value coincides exactly with the number of measurements performed by the central party. Furthermore, we characterize the corresponding bilocal bound and establish a necessary and sufficient condition for a nontrivial quantum violation.

Perhaps the most remarkable feature of the proposed inequalities is that, under a suitable condition on the underlying ROCN matrices, they enable an almost complete (up to certain equivalences) characterization of the entire network. Specifically, they allow one to certify both the distributed quantum states and the measurements performed by all parties. Indeed, maximal violations of these nonlinear inequalities, as well as of the corresponding linear counterparts involving the same combinations of expectation values, enable full network self-testing. 

Although self-testing in this class of quantum networks has been previously explored~\cite{Supic2023,Sarkar2026}, the self-testing statement presented here is arguably the first general result of this kind based on the violation of a single nonlinear Bell inequality and its linear counterpart. Therefore, our approach is conceptually cleaner than the existing protocols and more practical for device-independent certification, as it relies essentially on a single experimentally accessible witness.

The results of this work therefore provide a unified framework in which ROCN bilocal inequalities serve both as witnesses of nonbilocality and as certification tools for tripartite quantum networks.

\textit{Preliminaries.}---In the bilocal network scenario, two external parties, Alice and Charlie, each share a source with a central party, Bob, as illustrated in Fig.~\ref{fig:network}. The two independent sources $S_1$ and $S_2$ emit quantum states $\rho_\mathrm{AB_1}$ and $\rho_\mathrm{B_2 C}$ acting on finite-dimensional Hilbert spaces $\mathcal{H}_\mathrm{AB_1} = \mathcal{H}_\mathrm{A} \otimes \mathcal{H}_\mathrm{B_1}$ and $\mathcal{H}_\mathrm{B_2 C} = \mathcal{H}_\mathrm{B_2} \otimes \mathcal{H}_\mathrm{C}$, respectively. Consequently, Bob’s total Hilbert space decomposes as the tensor product $\mathcal{H}_\mathrm{B} = \mathcal{H}_\mathrm{B_1} \otimes \mathcal{H}_\mathrm{B_2}$. 

On their respective subsystems, Alice, Bob, and Charlie perform two-outcome measurements that are freely chosen from three independent sets. By Naimark's dilation theorem~\cite{Holevo2012}, it is sufficient to restrict attention to pure states and projective measurements. We can therefore represent 
these measurements by $\pm1$-eigenvalued observables, denoted by $A_x$, $B_y$ and $C_z$ for Alice, Bob and Charlie, where $x, z \in \{1,\ldots,m\}$ and $y \in \{1,\ldots,n\}$. Formally, $A_x$, $B_y$ and $C_z$ are Hermitian operators acting on $\mathcal{H}_\mathrm{A}$, $\mathcal{H}_\mathrm{B}$ and $\mathcal{H}_\mathrm{C}$, respectively, which satisfy $A_x^2 = B_y^2 = C_z^2 = \mathds{1}$. Here, with a slight abuse of notation, $\mathds{1}$ denotes the identity on the corresponding local Hilbert space. The outcomes obtained by Alice, Bob and Charlie are denoted by $a,b,c \in \{-1,1\}$.

The correlations generated when the parties repeat their measurements over many rounds are described by a family of conditional probability distributions $\{p(a,b,c|x,y,z)\}$, where $p(a,b,c|x,y,z)$ denotes the probability that Alice, Bob, and Charlie obtain outcomes $a$, $b$ and $c$ given the measurement settings $x$, $y$ and $z$, respectively. The corresponding joint expectation values, also referred to as correlators, are defined as 
\begin{equation}\label{correlators}
    \langle A_x B_y C_z \rangle = \sum_{a,b,c=\pm 1} abc \,p(a,b,c|x,y,z).
\end{equation}
Explicitly, for a quantum realization, these correlators are given by Born's rule: 
\begin{equation}\label{eq:quantum_corr}
    \langle A_x B_y C_z \rangle=\Tr(\rho_{\mathrm{AB_1}}\otimes \rho_{\mathrm{B_2C}} \,A_x\otimes B_y\otimes C_z).
\end{equation}

In local hidden variable models for network correlations, the sources $S_1$ and $S_2$ are associated with independent variables $\lambda_1$ and $\lambda_2$, respectively. Denoting by $\mu_i(\lambda_i)$ the probability density of $\lambda_i$, the correlators admit the decomposition
\begin{align}\label{local}
    \langle A_x B_y C_z \rangle =&\! \int \mathrm{d}\lambda_1\, \mu_1(\lambda_1) \int \mathrm{d}\lambda_2\, \mu_2(\lambda_2) \nonumber\\
    &\times \langle A_x \rangle_{\lambda_1} \langle B_y \rangle_{\lambda_1,\lambda_2} \langle C_z \rangle_{\lambda_2},
\end{align}
where $-1\leqslant \langle A_x \rangle_{\lambda_1}, \langle B_y \rangle_{\lambda_1,\lambda_2}, \langle C_z \rangle_{\lambda_2} \leqslant 1$ for any $\lambda_1, \lambda_2$ and any measurement choices $x,y,z$. 

In the standard tripartite Bell scenario with a single hidden variable $\lambda$, the set of all local correlations forms a convex polytope. Its vertices correspond to local deterministic correlations, which satisfy $\langle A_x B_y C_z\rangle = \langle A_x \rangle \langle B_y \rangle \langle C_z \rangle$ for fixed $\lambda$, with the individual expectation values $\langle A_x\rangle$, $\langle B_y\rangle$ and $\langle C_z \rangle$ taking values in $\{-1,1\}$ for any $x,y,z$. When source independence is imposed, the set of correlations attainable in a network acquires a more complex structure. The resulting network-local (or bilocal) set is contained within the conventional local polytope and shares the same deterministic vertices, but it is no longer convex~\cite{Branciard2010}.

Quantum correlations~\eqref{eq:quantum_corr} that cannot be expressed through any bilocal model are termed \emph{nonbilocal}. As in the standard Bell scenario, such correlations can be detected using appropriately chosen inequalities. However, because the network-local set is nonconvex, the optimal witnesses of network nonlocality are necessarily non-linear~\cite{Fritz2012}. Nevertheless, deriving such nonlinear inequalities remains a notoriously challenging task~\cite{Tavakoli2022}.

We consider a family of nonlinear Bell inequalities constructed from linear combinations of correlators (\ref{correlators}). For each input $y$, we define
\begin{equation}\label{eq:I_y}
    I_y \vcentcolon= \sum_{x = 1}^m \sum_{z = 1}^m h_{xy} h_{zy} \langle A_x B_y C_z \rangle,
\end{equation}
where $h$ is a real-valued $m\times n$ matrix. These quantities are then combined into the nonlinear expression
\begin{equation}\label{eq:S_inequality}
    \mathcal{S}_h \vcentcolon= \sum_{y = 1}^n \sqrt{\left| I_y \right|} \leqslant \beta_\mathrm{BL},
\end{equation}
where the constant $\beta_\mathrm{BL}$ denotes the maximal value of $\mathcal{S}_h$ achievable by correlations that admit a bilocal decomposition~\eqref{local}, and we refer to it as the \emph{bilocal bound}.

Notably, this description generalizes the pioneering example of the 22-type BRGP inequality~\cite{Branciard2012}, which corresponds to the simplest bilocal scenario in which each party chooses between two possible two-outcome observables ($m=n=2$). In this case, the combinations $I_y^\mathrm{BRGP}$ are defined as
\begin{align}
    I_1^\mathrm{BRGP} \vcentcolon={} \frac{1}{\sqrt{2}} (& \langle A_1 B_1 C_1 \rangle - \langle A_1 B_1 C_2 \rangle  \nonumber\\
    &- \langle A_2 B_1 C_1 \rangle + \langle A_2 B_1 C_2 \rangle),\nonumber\\
    I_2^\mathrm{BRGP} \vcentcolon={} \frac{1}{\sqrt{2}} (& \langle A_1 B_1 C_1 \rangle + \langle A_1 B_1 C_2 \rangle \nonumber\\
    &+ \langle A_2 B_1 C_1 \rangle + \langle A_2 B_1 C_2 \rangle),
\end{align}
leading to the nonlinear inequality
\begin{equation}\label{eq:BRGP}
    \mathcal{S}_\mathrm{BRGP} \vcentcolon= \sqrt{\left|I_1^\mathrm{BRGP}\right|} + \sqrt{\left|I_2^\mathrm{BRGP}\right|} \leqslant \sqrt{2}.
\end{equation}
The maximal quantum value of $\mathcal{S}_\mathrm{BRGP}$ is $\beta_\mathrm{Q}^{\mathrm{BRGP}}=2$, achieved when each source emits a maximally entangled two-qubit state $\ket{\Phi_2}=(|00\rangle+|11\rangle)/\sqrt{2}$, and the local observables are $A_1=C_1=(X+Z)/\sqrt{2}$, $A_2=C_2=(X-Z)/\sqrt{2}$ for Alice and Charlie, and $B_1=X \otimes X$, $B_2=Z \otimes Z$ for Bob. Here, $X$, $Y$ and $Z$ denote the Pauli matrices. 

Note that the observables of Alice and Charlie anticommute, i.e., $\{A_1,A_2\}=\{C_1,C_2\}=0$, whereas those of Bob commute, $[B_1,B_2]=0$. Furthermore, the BRGP inequality is often studied in the context of entanglement swapping~\cite{Branciard2012}, where Bob performs a single four-outcome measurement, thereby enabling Alice and Charlie to share an entangled state. In fact, it has been demonstrated~\cite{Branciard2012,Andreoli2017} that the nonbilocal correlations generated via entanglement swapping can always be mapped to correlations that are attainable in a scenario featuring two binary measurements for Bob. 

\textit{ROCN Bell inequalities for bilocal networks.}---We now proceed to define the family of ROCN bilocal inequalities, which extend the ROCN Bell inequalities introduced in Ref.~\cite{Michalski2025} to the network scenario. This extension enables the derivation of a wide class of Bell inequalities for detecting nonbilocality and self-testing the quantum network. Central to this construction is the notion of an ROCN matrix, whose definition we recall below for completeness.

\begin{definition}[ROCN matrix]\label{def:ROCN_matrix}
    Let $h$ be a real $m\times n$ matrix with $m\leqslant n$. We say that $h$ is a \emph{row-orthogonal and column-normalized (ROCN) matrix} if its rows are pairwise orthogonal and nonzero, and its columns are normalized.
\end{definition}

Any ROCN matrix $h\in\mathds{R}^{m \times n}$ can be used to construct a nonlinear Bell inequality of the form introduced in Eq.~\eqref{eq:S_inequality} by treating its entries as coefficients in the combinations $I_y$. The resulting inequality reads
\begin{equation}\label{eq:S_h_inequality}
    \mathcal{S}_h \vcentcolon= \sum_{y = 1}^n \sqrt{\left| \sum_{x = 1}^m \sum_{z = 1}^m h_{xy} h_{zy} \langle A_x B_y C_z \rangle \right|} \leqslant \beta^h_\mathrm{BL},
\end{equation}
and we refer to it as the \emph{ROCN bilocal inequality} associated with the matrix $h$. The maximal quantum value of any ROCN bilocal inequality can be analytically determined via the following theorem.

\begin{theorem}\label{th:quantum_bound}
    For any ROCN matrix $h\in\mathds{R}^{m \times n}$, the maximal quantum value of the corresponding ROCN bilocal inequality~\eqref{eq:S_h_inequality} is $\beta^h_\mathrm{Q} = n$ and equals the number of Bob's measurements.
\end{theorem}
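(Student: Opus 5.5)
The proof has two parts: an upper bound $\mathcal{S}_h\leqslant n$ valid for every quantum realization, and an explicit realization attaining $n$.

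\emph{Upper bound.} For each $y$ we introduce the operators
\[
\tilde A_y=\sum_{x=1}^m h_{xy}A_x,\qquad \tilde C_y=\sum_{z=1}^m h_{zy}C_z .
\]
Then $I_y=\langle \tilde A_y\otimes B_y\otimes \tilde C_y\rangle$ in the product state $\ket{\psi_{AB_1}}\otimes\ket{\psi_{B_2C}}$. Since $B_y^2=\mathds{1}$, Cauchy--Schwarz gives
\[
|I_y|\leqslant \|(\tilde A_y\otimes \mathds{1}_{B})\ket{\psi}\|\;\|(\mathds{1}_B\otimes\tilde C_y)\ket{\psi}\|.
\]
By the product structure, these two norms reduce to $\sqrt{\langle \tilde A_y^2\rangle_{\rho_A}}$ and $\sqrt{\langle \tilde C_y^2\rangle_{\rho_C}}$, so $\sqrt{|I_y|}\leqslant \langle\tilde A_y^2\rangle^{1/4}\langle\tilde C_y^2\rangle^{1/4}$.

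We sum over $y$ and apply Cauchy--Schwarz (equivalently Hölder) to the sum:
\[
\mathcal{S}_h\leqslant \Big(\sum_y \langle \tilde A_y^2\rangle^{1/2}\Big)^{1/2}\Big(\sum_y\langle\tilde C_y^2\rangle^{1/2}\Big)^{1/2}.
\]
A second Cauchy--Schwarz step yields
\[
\sum_y \langle\tilde A_y^2\rangle^{1/2}\leqslant \sqrt{n}\Big(\sum_y\langle\tilde A_y^2\rangle\Big)^{1/2}.
\]
Now
\[
\sum_y \tilde A_y^2=\sum_{x,x'}\Big(\sum_y h_{xy}h_{x'y}\Big)A_xA_{x'}.
\]
Row-orthogonality kills the $x\neq x'$ terms, leaving $\sum_x \|h_x\|^2\mathds{1}=\big(\sum_{x,y}h_{xy}^2\big)\mathds{1}=n\mathds{1}$, where the last equality uses column normalization. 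Hence $\sum_y\langle\tilde A_y^2\rangle=n$, so $\sum_y\langle\tilde A_y^2\rangle^{1/2}\leqslant\sqrt{n}\cdot\sqrt{n}=n$. The same bound holds for Charlie, and therefore $\mathcal{S}_h\leqslant n$. The key observation is that the off-diagonal terms cancel without any assumption on the commutation of the $A_x$, which makes the bound fully device-independent. We should also double-check that the ``mixed'' norm reduction is legitimate. It is: $\|(\tilde A_y\otimes\mathds{1})\ket{\psi_{AB_1}}\otimes\ket{\psi_{B_2C}}\|^2=\langle\psi_{AB_1}|\tilde A_y^2|\psi_{AB_1}\rangle$.

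\emph{Achievability.} This is the step requiring care, and it fixes the conditions for equality. Equality forces $\langle\tilde A_y^2\rangle=1$ for all $y$, together with $B_y$ mapping $(\tilde A_y\otimes\mathds{1})\ket\psi$ to a parallel copy of $(\mathds{1}\otimes\tilde C_y)\ket\psi$. We take pairwise anticommuting Clifford observables $\Gamma_1,\dots,\Gamma_m$ acting on $\mathds{C}^{d}$ with $d=2^{\lfloor m/2\rfloor}$. We set $A_x=\Gamma_x$ and $C_z=\Gamma_z^{T}$ (or their complex conjugates, as needed for the maximally entangled state $\ket{\Phi_d}$ on each source). Anticommutation gives
\[
\tilde A_y^2=\sum_x h_{xy}^2\,\mathds{1}=\mathds{1}
\]
by column normalization, so each $\tilde A_y$ is itself a $\pm1$ observable, and likewise each $\tilde C_y$. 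We then choose Bob's $B_y=\tilde A_y^{T}\otimes\tilde C_y^{T}$ (appropriately transposed/conjugated) acting on $B_1B_2$; this is a valid $\pm1$ observable. Using $(M\otimes\mathds{1})\ket{\Phi_d}=(\mathds{1}\otimes M^T)\ket{\Phi_d}$, we get $I_y=\langle\Phi_d|\tilde A_y\otimes\tilde A_y^{T}|\Phi_d\rangle^{\,}\cdot(\text{same for }C)=1$, and hence $\mathcal{S}_h=n$. The only point needing care is handling the transposes and complex conjugates when the Clifford matrices are not all real. We resolve this by defining Charlie's and Bob's operators via transposes so that each factor evaluates to $\operatorname{Tr}(\tilde A_y^2)/d=1$. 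Combining the two parts gives $\beta^h_\mathrm{Q}=n$.
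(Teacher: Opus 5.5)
Your proof is correct. Your achievability construction is the paper's reference strategy: whether Charlie uses $\Gamma_z$ or $\Gamma_z^{\mathsf T}$ makes no difference once $B_y=\tilde A_y^{\mathsf T}\otimes\tilde C_y^{\mathsf T}$.

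Your upper bound, however, runs in the opposite order from the paper's; below I write $\slashed{A}_y,\slashed{C}_y$ for your $\tilde A_y,\tilde C_y$. The paper first proves the linear bound $\sum_y s_y I_y\leqslant n$ for every sign vector $\vec{s}$, via the operator SOS $\frac12\sum_y N_{y,\vec{s}}^\dagger N_{y,\vec{s}}=n\mathds{1}-\sum_y s_y\mathcal{B}_y$. Term by term this is the arithmetic-mean bound $|I_y|\leqslant\frac12\big(\langle\slashed{A}_y^2\rangle+\langle\slashed{C}_y^2\rangle\big)$. Only afterwards does it handle the square roots, by Jensen: $\mathcal{S}_h\leqslant\sqrt{n\sum_y|I_y|}$. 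You instead use the sharper geometric-mean (Cauchy--Schwarz) bound on each term, and then distribute the square roots with two further Cauchy--Schwarz steps.

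Both routes rest on the same ROCN identity $\sum_y\slashed{A}_y^2=\sum_y\slashed{C}_y^2=n\mathds{1}$. Neither needs source independence; your appeal to the product structure is superfluous, since the norm reduction holds for any state.

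What the paper's ordering buys is that its intermediate step is an operator inequality. This gives at once the quantum bound $n$ for the linear expression $\mathcal{L}_h$. It also gives the exact saturation conditions, $N_y\ket{\psi}=0$ together with $|I_1|=\dots=|I_n|=1$, which are precisely what Part~I of the self-testing proof uses.

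Your chain recovers the same conditions only by tracking the equality cases of all three Cauchy--Schwarz steps. These are parallelism of $(\slashed{A}_y\otimes\mathds{1}\otimes\mathds{1})\ket{\psi}$ and $(\mathds{1}\otimes B_y\otimes\slashed{C}_y)\ket{\psi}$, and $\langle\slashed{A}_y^2\rangle=\langle\slashed{C}_y^2\rangle=1$ for all $y$. For the quantum bound itself, though, your argument is just as short and complete.
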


The proof of this theorem proceeds in two main steps. The first, deferred to Appendix~\ref{sec:quantum_bound}, shows–--using a suitable sum-of-squares decomposition–--that for any ROCN matrix $h$, the bilocal expression $\mathcal{S}_h$ is bounded from above by $n$, which is the number of Bob's observables. The second step is to provide a quantum strategy that saturates this bound. This strategy relies on two maximally entangled states of appropriate local dimensions emitted by the sources, along with two sets of pairwise anticommuting observables measured by Alice and Charlie. Given its relevance to self-testing, we explicitly specify this quantum strategy below for arbitrary $m$ and $n$.

The explicit form of Alice’s and Charlie’s observables is derived via the Jordan--Wigner representation theorem~\cite{Jordan1928,Samoilenko1991}, which provides an explicit realization of families of pairwise anticommuting operators on multi-qubit Hilbert spaces. The resulting collection of observables constitutes a representation of the generators of a Clifford algebra. Owing to their close connection with the operator formalism of quantum field theory~\cite{Greiner1996}, such operators are often referred to as Majorana fermions. For completeness, the explicit form of the Jordan--Wigner representation is recalled in Appendix~\ref{sec:jordan_wigner}.

We now define the optimal quantum strategy that attains $\beta_Q^h=n$. Alice, Bob, and Charlie have access to the Hilbert spaces $\tilde{\mathcal{H}}_{\mathrm{A}}$, $\tilde{\mathcal{H}}_{\mathrm{B}} = \tilde{\mathcal{H}}_{\mathrm{B}_1} \otimes \tilde{\mathcal{H}}_{\mathrm{B}_2}$, and $\tilde{\mathcal{H}}_{\mathrm{C}}$, respectively, with $\tilde{\mathcal{H}}_{\mathrm{A}} = \tilde{\mathcal{H}}_{\mathrm{B}_1} = \tilde{\mathcal{H}}_{\mathrm{B}_2} = \tilde{\mathcal{H}}_{\mathrm{C}} = (\mathbb{C}^{2})^{\otimes r}$ and $r = \lfloor m/2 \rfloor$. Each of the two independent sources, $S_1$ and $S_2$, emits the maximally entangled state
\begin{equation}\label{eq:me_state}
    \ket{ \Phi_d}=\frac{1}{\sqrt d}\sum_{\boldsymbol{k}\in\{0,1\}^r}\ket{\boldsymbol{k}} \otimes\ket{\boldsymbol{k}} \in (\mathbb{C}^{2})^{\otimes r} \otimes (\mathbb{C}^{2})^{\otimes r},
\end{equation}
where $d=2^r$, $\boldsymbol{k}=(k_1,\dots,k_r)$, and $\ket{\boldsymbol{k}}=\ket{k_1} \otimes \dots \otimes \ket{k_r}$ denotes the computational basis. The two copies of the state~\eqref{eq:me_state} are distributed across the bipartitions $(\tilde{ \mathcal H}_{\mathrm A}, \tilde{\mathcal H}_{\mathrm B_1})$ and $(\tilde{\mathcal H}_{\mathrm B_2},\tilde{\mathcal H}_{\mathrm C})$. Alice and Charlie have access to the observables $\{\tilde A_x\}_{x=1}^m$ and $\{\tilde C_z\}_{z=1}^m$ respectively, which satisfy the canonical anticommutation relations, $\{\tilde A_x,\tilde A_z\}=\{\tilde C_x,\tilde C_z\}=2\delta_{xz}\mathds 1$. Their explicit form is given by Eqs.~\eqref{eq:U} and~\eqref{eq:last_operator_JW} in Appendix~\ref{sec:jordan_wigner}, with $\mathcal H'=\mathbb C$ and $\Gamma_m'=1$. 
Finally, Bob’s observables are defined as a bilinear combination of the transposed operators of Alice and Charlie:
\begin{equation}\label{eq:bob_observables}
       \tilde B_y = \sum_{x=1}^m \sum_{z=1}^m h_{xy} h_{zy} \, \tilde{A}_x^\mathsf{T} \otimes \tilde{C}_z^\mathsf{T},
\end{equation}
with $\mathsf{T}$ denoting the transposition with respect to the canonical basis. One can easily verify that, with this definition, $\tilde B_y^2=\mathds 1$. Altogether, we call the above strategy the \emph{reference strategy} for achieving the maximal quantum bound.

\begin{definition}\label{def:reference_strategy}
    For any ROCN matrix $h\in\mathds{R}^{m \times n}$ and its associated inequality \eqref{eq:S_h_inequality}, the \emph{reference strategy} is defined as a configuration wherein Alice and Charlie each share a copy of the maximally entangled state~\eqref{eq:me_state} with Bob, and measure the anticommuting observables $\{\tilde A_x\}$ and $\{\tilde C_z\}$ detailed in Appendix~\ref{sec:jordan_wigner}, while Bob's observables $\{\tilde B_y\}$ are formed as bilinear combinations of the transposed operators of Alice and Charlie, according to Eq.~\eqref{eq:bob_observables}.
\end{definition}

Notice that $\ket{\Phi_d}$ in Eq.~\eqref{eq:me_state} factorizes as the tensor product of $r$ copies of the two-qubit maximally entangled state $\ket{\Phi_2}$. Furthermore, Bob’s measurement is separable with respect to his two local Hilbert subspaces. This separability significantly simplifies its experimental implementation~\cite{Andreoli2017}.

The explicit form of the quantum bound for bilocal ROCN Bell inequalities is essential for proving the self-testing statement. However, detecting nonbilocality also requires determining the corresponding bilocal bound. As shown in Appendix~\ref{appendix:bilocal_bound}, this bilocal bound coincides with the classical bound of the standard bipartite ROCN Bell inequality discussed in~\cite{Michalski2025}. Consequently, this local bound does not admit a closed-form expression, and the necessary and sufficient conditions for a nontrivial quantum violation are further analyzed in Appendix~\ref{appendix:bilocal_bound}.

\textit{Self-testing from ROCN bilocal inequalities.}---Beyond merely witnessing nonbilocality, we are interested in exploring whether the observed violations of our inequalities can also certify the underlying quantum states and measurements. More concretely, we ask whether the maximal violation of the ROCN bilocal inequality~\eqref{eq:S_inequality} uniquely determines---up to the standard equivalences---the reference strategy introduced in Def.~\ref{def:reference_strategy}. Such a certification is known as self-testing \cite{Mayers2004,Supic2020}.

Our main result demonstrates that the proposed inequalities indeed enable self-testing of the reference quantum strategy, provided that the nonlinear inequality is combined with the following linear ROCN Bell expression:
\begin{equation}\label{eq:rocn_bell_inequality}
	\mathcal{L}_h = \sum_{y=1}^n I_y = \sum_{y = 1}^n \sum_{x = 1}^m \sum_{z = 1}^m h_{xy} h_{zy} \langle A_x B_y C_z \rangle.
\end{equation}
This linear counterpart is composed of the same terms $I_{y}$ as the original nonlinear Bell inequality. Its maximal  quantum value also amounts to $\beta^h_Q = n$ and is achieved by the same reference strategy. Specifically, when the external parties perform an even number of measurements (even $m$), the simultaneous maximal violation of both inequalities completely characterizes the reference strategy up to the standard self-testing equivalences, namely local isometries, the addition of ancillary degrees of freedom, and complex conjugation of the states and measurements. For odd values of $m$, however, the intrinsic structure of quantum mechanics gives rise to an additional, unavoidable, equivalence corresponding to partial transposition. A detailed discussion of this issue, as well as the formal self-testing theorem, are provided in Appendix~\ref{appendix:self_testing_statement}.

Importantly, the feasibility of self-testing with ROCN inequalities is governed by a purely algebraic property of the coefficient matrix $h$, encoded in an auxiliary matrix $M$ built from pairwise products $h_{xy}h_{zy}$. The full column rank of $M$ is exactly the condition that turns the maximal violation into a complete certification statement.

\begin{theorem}\label{th:self_testing_statement}
    Let $h$ be an $m \times n$ ROCN matrix. Define the $n \times m(m-1)/2$ matrix $M$ as
    \begin{equation}\label{eq:matrix_M}
        M_{y,(x,z)} = h_{xy} h_{zy},
    \end{equation}
    where $y \in \{1,\dots,n\}$ and $x,z \in \{1,\dots,m\}$ with $x < z$, and each pair $(x,z)$ is treated as a single composite index. The maximal violation of both the corresponding ROCN bilocal inequality~\eqref{eq:S_h_inequality} and the linear ROCN Bell inequality~\eqref{eq:rocn_bell_inequality} self-tests the reference strategy from Definition~\ref{def:reference_strategy} if and only if $M$ has full column rank:
    \begin{equation}\label{eq:self-testing_condition}
        \rank(M) = \frac{1}{2}m(m-1).
    \end{equation}
\end{theorem}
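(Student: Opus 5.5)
We prove the two directions separately. Throughout we use the sum-of-squares (SOS) decomposition behind Theorem~\ref{th:quantum_bound}, in its linear form. Define the operators
\[
\bar A_y=\sum_x h_{xy}A_x,\qquad \bar C_y=\sum_z h_{zy}C_z .
\]
Then $\mathcal L_h=\sum_y\langle \bar A_y\otimes B_y\otimes \bar C_y\rangle$. The SOS gives
\[
n\mathds 1-\sum_y \bar A_y B_y\bar C_y=\tfrac12\sum_y P_y^\dagger P_y,
\]
with $P_y=\mathds 1-\bar A_y B_y \bar C_y$ or a suitable normalized variant. Column normalization and $A_x^2=\mathds 1$ are used to bound $\bar A_y^2$ and $\bar C_y^2$.

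\emph{Sufficiency} ($\rank M=m(m-1)/2$). The first step is to extract exact operator relations from maximal violation. The SOS forces $\bar A_y\otimes\bar C_y\ket\psi=B_y\ket\psi$ for every $y$. Since $B_y^2=\mathds 1$, this yields
\[
\langle \bar A_y^2\otimes \bar C_y^2\rangle=1 .
\]
Combined with the bounds $\|\bar A_y\|,\|\bar C_y\|\le 1$ on the support, it gives $\bar A_y^2\ket\psi=\ket\psi$ on the reduced support. Expanding,
\[
\bar A_y^2=\sum_x h_{xy}^2\mathds 1+\sum_{x<z}h_{xy}h_{zy}\{A_x,A_z\}.
\]
Normalization of the columns then gives
\[
\sum_{x<z}M_{y,(x,z)}\,\{A_x,A_z\}\ket\psi=0\quad\text{for all }y .
\]
The maximal value of the nonlinear $\mathcal S_h$, also $n$, is needed here. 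It ensures each $|I_y|$ individually saturates, namely $I_y=1$; combined with $\mathcal L_h=n$ this fixes the signs. Full column rank of $M$ lets us invert this linear system, so $\{A_x,A_z\}\ket\psi=0$ for all $x\neq z$. The same holds for Charlie, and the relations transfer to the reduced supports of Alice and Charlie using full-rank marginals; we restrict to the support as is standard.

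The second step is to self-test from these anticommutation relations. Pairwise anticommuting $\pm1$ observables generate a Clifford algebra representation. For even $m$, the irreducible representation is unique, so there is a local unitary $\mathcal H_A\simeq \tilde{\mathcal H}_A\otimes\mathcal H'_A$ with $A_x\mapsto\tilde A_x\otimes\mathds 1$. For odd $m$, there are two inequivalent irreducible representations, distinguished by the sign of the product $\Gamma_m'$. This is handled by the decomposition of Appendix~\ref{sec:jordan_wigner}, with $\mathcal H'$ and $\Gamma'_m=\pm1$ blocks; it is the origin of the partial-transposition/complex-conjugation equivalence. With the representation fixed, the condition $(A_x\otimes\mathds 1)\ket\psi$ correlated with Bob forces $\ket{\psi_{AB_1}}$ to be $\ket{\Phi_d}\otimes\ket{\mathrm{junk}}$ via the standard swap-isometry argument, using that $\bar A_y$ spans the Clifford generators. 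The same argument applied to Charlie fixes $\ket{\psi_{B_2C}}$. Bob's observables are then fixed by $B_y\ket\psi=\bar A_y\bar C_y\ket\psi$, which gives Eq.~\eqref{eq:bob_observables} on the support.

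\emph{Necessity} ($\rank M<m(m-1)/2$). We construct an inequivalent optimal strategy. Pick a nonzero $c\in\ker M$, viewed as an antisymmetric-index vector $c_{xz}$, $x<z$. We deform Alice's observables so that $\{A_x,A_z\}=2\epsilon c_{xz}\mathds 1$ while $A_x^2=\mathds 1$; for small $\epsilon$ this can be done using $A_x=\sum_k G_{xk}\tilde A_k$ with Gram matrix $G G^\mathsf T=\mathds 1+\epsilon c$, where $c$ is symmetrized. The kernel condition gives $\bar A_y^2=\mathds 1$ for every $y$. Taking $B_y=\bar A_y^\mathsf T\otimes\bar C_y^\mathsf T$ (or Bob's correspondingly normalized operators) on maximally entangled states then still yields $I_y=1$ for all $y$, so both $\mathcal S_h=\mathcal L_h=n$. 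Yet Alice's observables do not anticommute, so the strategy is not locally equivalent to the reference one.

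The main obstacle is the first sufficiency step: deriving the exact relation $\bar A_y^2\ket\psi=\ket\psi$ from the SOS. This requires care with Bob's operator coupling both sources and with the state being a product $\rho_{AB_1}\otimes\rho_{B_2C}$ rather than generic. I would try to use the product structure to factor $\langle\bar A_y^2\otimes\bar C_y^2\rangle=\langle\bar A_y^2\rangle\langle\bar C_y^2\rangle$. Each factor is at most $1$, so each must equal $1$. Necessity also needs a check that the deformed $A_x$ remain valid observables and that the deformed strategy is not an isometric image of the reference one.
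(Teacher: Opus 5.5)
Your overall plan (SOS, then injectivity of $M$ to get anticommutation, then Jordan--Wigner, then the states and Bob) matches the paper's, and your necessity construction is sound. The sufficiency argument, however, has a real gap at its first step.

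The identity $n\mathds 1-\sum_y\bar A_yB_y\bar C_y=\tfrac12\sum_yP_y^\dagger P_y$ with $P_y=\mathds 1-\bar A_yB_y\bar C_y$ does not hold: the right-hand side is $\tfrac n2\mathds 1+\tfrac12\sum_y\bar A_y^2\otimes\bar C_y^2-\sum_y\bar A_yB_y\bar C_y$. Column normalization also does not give $\|\bar A_y\|\le 1$: for $h_{\mathrm{BRGP}}$ with $A_1=A_2$ one has $\bar A_1^2=2\mathds 1$. So your ``bounds on the support'' and your fallback ``each factor $\langle\bar A_y^2\rangle$ is at most $1$'' are both false. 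The individual $\bar A_y^2$ are uncontrolled; only their sum is, $\sum_y\bar A_y^2=\sum_{x,z}(hh^\mathsf T)_{xz}A_xA_z=n\mathds 1$. This is exactly where row orthogonality enters, and you never use it.

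The certificate that works pairs $\bar A_y$ against $B_y\bar C_y$, so the squares appear additively. With $N_y=\bar A_y\otimes\mathds 1\otimes\mathds 1-s_y\,\mathds 1\otimes B_y\otimes\bar C_y$ one gets $\tfrac12\sum_yN_y^\dagger N_y=n\mathds 1-\sum_ys_y\,\bar A_y\otimes B_y\otimes\bar C_y$. Then $N_y\ket{\psi}=0$ yields $(\bar A_y^2\otimes\mathds 1\otimes\mathds 1)\ket{\psi}=s_y(\bar A_y\otimes B_y\otimes\bar C_y)\ket{\psi}=(\mathds 1\otimes\mathds 1\otimes\bar C_y^2)\ket{\psi}$. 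Because $\ket{\psi}$ is a product across the two sources, $\ket{\psi}_{\mathrm{AB_1}}$ must be an eigenvector of $\bar A_y^2$ with eigenvalue $\langle\bar A_y^2\rangle=|I_y|$. The nonlinear inequality (equal $|I_y|$) is what forces this eigenvalue to be $1$. Your relation $(\bar A_y\otimes\bar C_y)\ket{\psi}=s_yB_y\ket{\psi}$ only follows after this, so the order of your derivation is reversed.

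The second gap is the state certification. The ``standard swap-isometry argument'' needs, for each $A_x$, a partner operator acting on $\mathcal H_{\mathrm B_1}$ alone that reproduces $A_x\ket{\psi}$ and obeys the same algebra. What you actually have is $B_y\ket{\psi}=s_y(\bar A_y\otimes\bar C_y)\ket{\psi}$, with $B_y$ an arbitrary operator on $\mathcal H_{\mathrm B_1}\otimes\mathcal H_{\mathrm B_2}$ tying both sources together. Decoupling them is the network-specific core of the proof, and you assert it rather than prove it.

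The paper does it as follows:
\begin{enumerate}[(i)]
\item Write $\ket{\psi}_{\mathrm{AB_1}}=(P_\mathrm A\otimes\mathds 1)\ket{\Phi_{D_\mathrm A}}$ and $\ket{\psi}_{\mathrm{B_2C}}=(\mathds 1\otimes P_\mathrm C)\ket{\Phi_{D_\mathrm C}}$. The saturation condition becomes $B_y(P_\mathrm A\otimes P_\mathrm C)=s_y(P_\mathrm A\otimes P_\mathrm C)(\bar A_y^\mathsf T\otimes\bar C_y^\mathsf T)$.
\item Unitarity of $B_y$ and positivity give $[P_\mathrm A\otimes P_\mathrm C,\bar A_y^\mathsf T\otimes\bar C_y^\mathsf T]=0$, hence $B_y=s_y\bar A_y^\mathsf T\otimes\bar C_y^\mathsf T$.
\item Multiplying by $\mathds 1\otimes\bar C_y^\mathsf T$ and tracing out $\mathrm B_2$ (using $\bar C_y^2=\mathds 1$ and $\Tr P_\mathrm C>0$) gives $[P_\mathrm A,\bar A_y^\mathsf T]=0$.
\item Row orthogonality upgrades this to $[P_\mathrm A,A_x^\mathsf T]=0$ for every $x$.
\item Irreducibility of the Clifford generators on the $2^r$-dimensional factor forces $P_\mathrm A=\mathds 1\otimes P_{\mathrm A'}$, i.e.\ maximal entanglement tensored with junk.
\end{enumerate}

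Your necessity direction, by contrast, is correct and more explicit than the paper, which only cites Lemma~1 of the earlier bipartite ROCN work. Taking $A_x=\sum_kG_{xk}\tilde A_k$ with $GG^\mathsf T=\mathds 1+\epsilon c$ and $0\neq c\in\ker M$ gives valid observables with $\bar A_y^2=\mathds 1$ and $I_y=1$ for all $y$. The strategy is inequivalent because $\tfrac12\langle\{A_x,A_z\}\rangle=\epsilon c_{xz}$ is preserved by local isometries, while it vanishes for the reference strategy and for its partially transposed variant.
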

The detailed proof of this theorem is provided in Appendix~\ref{appendix:self_testing_statement}. It adapts the framework developed for the bipartite scenario in Ref.~\cite{Michalski2025}; see also~\cite{Konderak2026} for a general construction of self-testing ROCN matrices.

\textit{Examples.}---We now illustrate the construction with several examples. First, observe that the 22-type BRGP inequality~\eqref{eq:BRGP} can be readily recovered within the discussed framework. Indeed, the BRGP case corresponds to choosing the $2 \times 2$ ROCN matrix 
\begin{equation}
    h_\mathrm{BRGP} = \frac{1}{\sqrt{2}} \begin{bmatrix}
        +1 & -1 \\
        +1 & +1
    \end{bmatrix},
\end{equation}
which leads to the quantum bound $\beta^\mathrm{BRGP}_\mathrm{Q} = 2$. The corresponding classical (bilocal) value for this inequality is $\beta^\mathrm{BRGP}_\mathrm{BL} = \sqrt{2}$.

A second paradigmatic example concerns the following ROCN matrix which corresponds to the famous Gisin's elegant Bell inequality~\cite{Gisin2009}:
\begin{equation}\label{EBI}
        h_{\mathrm{EBI}}=\frac{1}{\sqrt 3}
    \begin{bmatrix}
         +1 & +1 & -1 & -1 \\
         +1 & -1 & +1 & -1 \\
         +1 & -1 &  -1 &  +1 
    \end{bmatrix}.
\end{equation}
The associated bilocal bound reads $\mathcal{S}_\mathrm{EBI} \leqslant 2 \sqrt{3}$, while the quantum bound reaches $\beta^\mathrm{EBI}_\mathrm{Q} = 4$. Crucially, this inequality enables self-testing, as its associated matrix $M$ satisfies the full-rank condition~\eqref{eq:self-testing_condition}.

Interestingly, we can obtain a broader class of nonlinear inequalities by considering the family of ROCN matrices generated via the transformation $h_\mathrm{EBI} \mapsto R\, h_\mathrm{EBI}$, where $R$ is a real $3\times3$ matrix that preserves the ROCN property from Definition~\ref{def:ROCN_matrix}. As discussed in Appendix~\ref{appendix:new_family}, the lowest bilocal bound for this family is given by $\beta_\mathrm{BL}^\mathrm{min} = 2\sqrt{2}$. In particular, consider the following one-parameter subclass:
\begin{equation}
    h_\theta = \frac{1}{\sqrt{2}} \begin{bmatrix}
        -1 & +1 & -1 & +1\\
        +\sin \theta & -\cos \theta & -\sin \theta & +\cos \theta\\
        -\cos \theta & -\sin \theta & -\cos\theta & +\sin \theta
    \end{bmatrix},
\end{equation}
where $\theta \in \left[0,\pi/2\right[$. The Bell inequalities derived from these matrices enable self-testing throughout the entire specified parameter regime except at $\theta = 0$, where condition~\eqref{eq:self-testing_condition} fails to hold. Nevertheless, because self-testing is guaranteed for all $\theta > 0$, we can construct self-testing matrices that arbitrarily approach the absolute minimum bilocal bound. In addition, as illustrated in Fig.~\ref{fig:bound_plot}, their bilocal bound $\beta^\theta_\mathrm{BL}$ is always strictly smaller than $\beta^\mathrm{EBI}_\mathrm{BL} = 2\sqrt{3}$, while the quantum bound remains invariant at $4$. Thus, this entire class of inequalities provides a larger quantum-classical gap than the standard elegant Bell inequality~\eqref{EBI}.

\begin{figure}
    \centering
    \includegraphics[width=\linewidth]{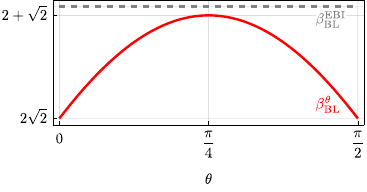}
    \caption{Bilocal bound $\beta^\theta_\mathrm{BL}$ for the family of bilocal ROCN Bell inequalities generated by $h_\theta$ as a function of the parameter $\theta$. The dashed gray line indicates the value of the bilocal bound $\beta_\mathrm{BL}^\mathrm{EBI} = 2\sqrt{3}$ corresponding to the matrix $h_\mathrm{EBI}$. }
    \label{fig:bound_plot}
\end{figure}


\textit{Conclusion.}---In this work, we develop a general and versatile framework for constructing nonlinear Bell inequalities tailored to quantum networks with three parties and two independent sources of correlations. This approach accommodates an arbitrary number of binary measurements per party and, as a special case, recovers the pioneering BRGP inequality introduced in Ref.~\cite{Branciard2012}. The construction is based on ROCN matrices, recently considered in Ref.~\cite{Michalski2025} for deriving Bell inequalities in the standard Bell scenario. A key advantage of this class of matrices is that it enables the analytical determination of the maximal quantum values for the resulting Bell inequalities, which is essential for device-independent applications.

Furthermore, we demonstrate that these inequalities allow for self-testing of the underlying quantum networks. Specifically, we prove that under a specific algebraic condition on the underlying ROCN matrices, the simultaneous maximal violation of both the nonlinear and associated linear inequalities certifies the reference quantum strategy. This strategy consists of a pair of maximally entangled states of suitable local dimensions, complemented by sets of Clifford observables for the external parties. The proof employs an SOS decomposition adapted to the bilocal structure to derive the anticommutation constraints that determine the form of the measurements and enable the reconstruction of the Schmidt coefficients of the shared states.

A number of open directions remain to be explored. Most notably, although explicit constructions of ROCN matrices satisfying the rank condition are known, existing approaches are not optimal in terms of the required number of columns. Improving this scaling behavior would significantly enhance the practical feasibility of our inequalities, potentially revealing tighter connections between the underlying algebraic structure and network nonlocality. Several additional avenues for further development can be identified: (i) designing self-testing schemes for joint entangled measurements performed by the central party beyond the Bell state measurement \cite{Kaniewski2018}, which are more experimentally accessible than the general framework proposed in Ref. \cite{Sarkar2026}; (ii) generalizing this approach to scenarios involving observables with more than two outcomes, which are largely unexplored in the literature; and (iii) developing inequalities tailored to partially entangled states emitted by the sources.

\textit{Acknowledgements.}---This work is supported by the National Science Centre (Poland) through
the SONATA BIS project No. 019/34/E/ST2/00369. The Center for Quantum-Enabled Computing project is carried out within the International Research Agendas programme of the Foundation for Polish Science co-financed by the European Union under the European Funds for Smart Economy 2021-2027 (FENG).
\bibliography{bibliography}

\onecolumngrid
\appendix

\section{Quantum bound}\label{sec:quantum_bound}
In this appendix, we provide the proof of the quantum bound for ROCN bilocal inequalities and their linear counterparts required for self-testing.
\begin{proof}[Proof of Theorem~\ref{th:quantum_bound}]
    For the given ROCN matrix $h$, introduce the shorthand notation
    \begin{equation}\label{eq:A_C_slashed}
        \slashed{A}_y = \sum_{x=1}^m h_{xy} A_x, \quad \slashed{C}_y = \sum_{z=1}^m h_{zy} C_z.
    \end{equation}
    Then, the family of operators $\mathcal{B}_y = \slashed{A}_y \otimes B_y \otimes \slashed{C}_y$ reproduces the quantities $I_y$ appearing in Eq.~\eqref{eq:I_y}:
    \begin{equation}
        I_y = \langle \mathcal{B}_y \rangle = \sum_{x=1}^m \sum_{z=1}^m h_{xy} h_{zy} \langle A_x B_y C_z \rangle.
    \end{equation}
    Since the square root function is a concave function, Jensen's inequality gives
    \begin{equation}\label{eq:Jensen}
        \frac{1}{n} \mathcal{S}_h = \frac{1}{n} \sum_{y=1}^n \sqrt{|I_y|} \leqslant \sqrt{\frac{1}{n} \sum_{y=1}^n \left|I_y\right|},
    \end{equation}
    with equality if and only if $|I_1| = |I_2| = \ldots = |I_n|$. We want to upper bound the sum appearing under the square root on the right hand side of the above inequality. This can be done via a suitable SOS decomposition. Take $\vec{s}\in\{-1,1\}^n$ and consider the operators
    \begin{equation}
       N_{y,\vec{s}} \vcentcolon= \slashed{A}_y \otimes \mathds{1} \otimes \mathds{1} -  s_y \mathds{1} \otimes B_y \otimes \slashed{C}_y.
    \end{equation}
    A direct computation, using the ROCN conditions for the matrix $h$, yields
    \begin{align}
        \frac{1}{2} \sum_{y=1}^n N^\dag_{y,\vec{s}} N_{y,\vec{s}} &= \frac{1}{2} \sum_{y=1}^n \left(\slashed{A}_y \otimes \mathds{1} \otimes \mathds{1} - s_y \mathds{1} \otimes B_y \otimes \slashed{C}_y\right)^2 \nonumber\\
        &= \frac{1}{2}\sum_{y=1}^n \left[\sum_{x=1}^m  \sum_{z=1}^m h_{xy}h_{zy} \left(A_x A_z \otimes \mathds 1\otimes \mathds 1+\mathds 1\otimes \mathds 1\otimes C_xC_z\right)-2 s_y \mathcal B_y\right] \nonumber\\ &=\sum_{y=1}^n\left[\sum_{x=1}^m h_{xy}^2\, \mathds{1} \otimes \mathds{1} \otimes \mathds{1}-s_y\mathcal{B}_y\right] =n \mathds{1} \otimes \mathds{1} \otimes \mathds{1} - \sum_{y=1}^n s_y\mathcal{B}_y \geqslant 0,
    \end{align}
    where we further employed the conditions $A_x^2=B_y^2=C_z^2=\mathds 1$.
    Taking the expectation value in an arbitrary state $\ket{\psi}$ gives
    \begin{equation}\label{eq:linear_bound}
        \sum_{y=1}^n s_y I_y = \sum_{y=1}^n s_y \langle \mathcal{B}_y \rangle \leqslant n.
    \end{equation}
    Observe that the sign of each term in the sum can be changed independently by redefining the vector $\vec{s}$. Since the bound above is invariant under such redefinitions, it follows that
    \begin{equation}\label{eq:B_quantum_bound}
        \sum_{y=1}^n |I_y| \leqslant n,
    \end{equation}
    which corresponds to taking $s_y = \mathrm{sgn}(I_y)$. Substituting Eq.~\eqref{eq:B_quantum_bound} into Eq.~\eqref{eq:Jensen} results in the following upper bound on the maximal quantum value:
    \begin{equation}\label{eq:S_quantum_bound}
        \beta_Q^h = \max_{\ket{\psi},\{A_x\},\{B_y\},\{C_z\}} \sum_{y=1}^n \sqrt{\left|\bra{\psi} {\mathcal{B}}_y \ket{\psi}\right|} \leqslant n.
    \end{equation}
    Inequality \eqref{eq:S_quantum_bound} is saturated exactly when both \eqref{eq:Jensen} and \eqref{eq:B_quantum_bound} are saturated. The latter occurs if and only if there exists a state $\ket{\psi}$ lying in the kernel of all $N_{y,\vec{s}}$ with $s_y = \mathrm{sgn}(I_y)$, which is equivalent to the condition
    \begin{equation}\label{eq:saturability}
        \forall y\in\{1,\ldots,n\}:\quad \left(\slashed{A}_y \otimes \mathds{1} \otimes \mathds{1}\right) \ket{\psi} =  \mathrm{sgn}(I_y) \left(\mathds{1} \otimes B_y \otimes \slashed{C}_y\right) \ket{\psi}.
    \end{equation}
    Saturation of \eqref{eq:Jensen} additionally requires that all the absolute values of combinations $I_y$ coincide, $|I_1| = |I_2| = \ldots = |I_n|$.
    
    To verify that the bound~\eqref{eq:S_quantum_bound} is tight, consider the reference strategy introduced in Definition~\ref{def:reference_strategy}. Using the anticommutation relations together with the ROCN conditions, we obtain
    \begin{align}
        {\slashed{{A}}}_y^2 = \sum_{x=1}^m \sum_{z=1}^m h_{xy} h_{zy} \tilde A_x \tilde A_z = \sum_{x=1}^m h_{xy}^2 \tilde A_x^2 + \sum_{x < z}^m h_{xy} h_{zy} \{\tilde A_x,\tilde A_z\} = \mathds{1}.
    \end{align}
    The same result holds for Charlie. The saturation condition~\eqref{eq:saturability} can be rewritten as
    \begin{equation}
        \forall y\in\{1,\ldots,n\}:\quad \mathrm{sgn}(I_y) \left(  \slashed{A}_y \otimes B_y \otimes \slashed{C}_y \right) \ket{\psi} = \ket{\psi}.
    \end{equation}
    Next, we employ the standard flip identity
    \begin{equation}\label{eq:flip-flop}
        \mathcal{O}\otimes \mathds{1} \,\ket{\Phi_d} = \mathds{1} \otimes \mathcal{O}^\mathsf{T}\ket{\Phi_d},
    \end{equation}
    valid for any operator $\mathcal{O} \in\mathcal B(\tilde{\mathcal H}_\mathrm{A})$. Applying this twice, we obtain
    \begin{align}\label{eq:kernel_N}
        \left(\slashed{A}_y \otimes B_y \otimes \slashed{C}_y \right) \ket{\Phi_d} \otimes \ket{\Phi_d} &= \left(\slashed{A}_y \otimes (\slashed{A}^\mathsf{T}_y \otimes \slashed{C}^\mathsf{T}_y) \otimes \slashed{C}_y \right) \ket{\Phi_d} \otimes \ket{\Phi_d}  \nonumber\\
        &= \left(\slashed{A}_y^2 \otimes \mathds{1} \otimes \mathds{1} \otimes \slashed{C}_y^2 \right) \ket{\Phi_d} \otimes \ket{\Phi_d} = \ket{\Phi_d} \otimes \ket{\Phi_d}.
    \end{align}
    Thus, $\ket{\Phi_d} \otimes \ket{\Phi_d}$ lies in the kernel of every $N_{y,\vec{s}}$ with $s_y = \mathrm{sgn}(I_y) = 1$, showing that the SOS bound~\eqref{eq:B_quantum_bound} is saturated. Moreover, the reference strategy satisfies $I_1 = I_2 = \ldots = I_n = 1$, so the inequality~\eqref{eq:Jensen} is also saturated. We therefore conclude that $\beta_Q^h=n$.
\end{proof}
Observe that the above arguments also establish that the quantum bound of the linear expression~\eqref{eq:rocn_bell_inequality} is $\beta_Q^h = n$. Indeed, the combination $\mathcal{L}_h$ is a particular case of the general expression in Eq.~\eqref{eq:linear_bound} obtained by setting $s_1 = s_2 = \ldots = s_n = 1$, and is thus subject to the same upper bound. Moreover, the reference strategy attains $\mathcal{L}_h = n$, so this bound is saturated.

\section{Jordan--Wigner representation}\label{sec:jordan_wigner}
In this appendix, we revisit the Jordan--Wigner representation theorem, which provides an explicit construction of a collection of mutually anticommuting observables.

\begin{theorem}
    Let $\{\Gamma_x\}_{x=1}^m$ be a set of observables acting on a finite-dimensional Hilbert space $\mathcal{H}$ and satisfying the canonical anticommutation relations $\{\Gamma_x, \Gamma_z \} = 2\delta_{xz} \mathds{1}$. Then, up to a unitary transformation, the Hilbert space $\mathcal H$ can be decomposed as a tensor product of qubits:
    \begin{equation}\label{eq:tensor_product_JW}
        \mathcal H \cong \bigotimes_{k=1}^{r} \mathbb{C}^2 \otimes \mathcal H',
    \end{equation}
    where $r=\lfloor m/2 \rfloor$. With respect to this decomposition, the first $2r$ operators $\Gamma_x$ can be expressed in terms of Pauli matrices as    
    \begin{equation}\label{eq:U}
        \Gamma_x =
        \begin{cases}
            Y^{\otimes (x-1)/2} \otimes Z \otimes \mathds{1} & \text{for } x \text{ odd}, \\
            Y^{\otimes (x-2)/2} \otimes X \otimes \mathds{1} & \text{for } x \text{ even},
        \end{cases}
    \end{equation}
    with the identities acting on the remaining qubit subspaces and $\mathcal{H}'$. For $m$ odd, the last operator is in the form
    \begin{equation}\label{eq:last_operator_JW}
        \Gamma_m = Y^{\otimes r} \otimes \Gamma_m'.
    \end{equation}
    where $\Gamma_m'$ is a Hermitian and unitary operator acting on $\mathcal{H'}$.
\end{theorem}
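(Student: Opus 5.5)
The plan is induction on the number of generators, using the algebra to split off one qubit per pair of generators. First I establish the spectral structure of a single pair. Set $\Gamma_1,\Gamma_2$ with $\Gamma_1^2=\Gamma_2^2=\mathds 1$ and $\{\Gamma_1,\Gamma_2\}=0$. Since $\Gamma_2$ anticommutes with $\Gamma_1$, it maps the $+1$ eigenspace $V_+$ of $\Gamma_1$ onto the $-1$ eigenspace $V_-$ unitarily (as $\Gamma_2^2=\mathds 1$). Hence $\dim V_+=\dim V_-$, and choosing an orthonormal basis $\{e_j\}$ of $V_+$ together with $\{\Gamma_2 e_j\}$ of $V_-$ gives a unitary identification $\mathcal H\cong\mathbb C^2\otimes\mathcal K$. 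In this frame $\Gamma_1=Z\otimes\mathds 1$ and $\Gamma_2=X\otimes\mathds 1$. This settles the case $m=2$ with $\mathcal H'=\mathcal K$; for $m=1$ there is nothing to prove, since $r=0$ and $\Gamma_1=\Gamma_1'$.

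For the inductive step, suppose $m\geq 3$. Let $\Gamma_x$ with $x\geq 3$ be any remaining generator. It anticommutes with both $Z\otimes\mathds 1$ and $X\otimes\mathds 1$. Writing $\Gamma_x=\sum_{P\in\{\mathds 1,X,Y,Z\}}P\otimes K_P$, anticommutation with $Z$ kills the $\mathds 1$ and $Z$ components, and anticommutation with $X$ kills the $X$ component. Hence $\Gamma_x=Y\otimes\Gamma'_x$. Hermiticity of $\Gamma_x$ and of $Y$ then forces $\Gamma'_x$ to be Hermitian. Moreover, since $Y^2=\mathds 1$, the relations $\{\Gamma_x,\Gamma_{x'}\}=2\delta_{xx'}\mathds 1$ for $x,x'\geq3$ translate directly into $\{\Gamma'_x,\Gamma'_{x'}\}=2\delta_{xx'}\mathds 1_{\mathcal K}$. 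So the operators $\{\Gamma'_x\}_{x=3}^m$ are $m-2$ canonically anticommuting observables on $\mathcal K$.

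Applying the induction hypothesis to these $m-2$ operators decomposes $\mathcal K\cong(\mathbb C^2)^{\otimes(r-1)}\otimes\mathcal H'$ via a unitary $U'$, and puts $\Gamma'_{x}$ into the form~\eqref{eq:U} after the index shift $x\mapsto x-2$. Tensoring with $\mathds 1$ on the first qubit, the combined unitary $\mathds 1\otimes U'$ preserves $\Gamma_1=Z\otimes\mathds1$ and $\Gamma_2=X\otimes\mathds1$. It also prepends a factor $Y$ to every other generator. This shift of the index by two is exactly what converts the exponent $(x-3)/2$ into $(x-1)/2$ for odd $x$, and analogously for even $x$, reproducing~\eqref{eq:U}. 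For odd $m$ the recursion terminates with a single leftover operator $\Gamma'_m$ on $\mathcal H'$; after the $r$ steps it has acquired $Y^{\otimes r}$, which yields~\eqref{eq:last_operator_JW}. That $\Gamma_m'$ is Hermitian and squares to the identity follows as above.

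The only delicate point is the first step: the equal dimension of the eigenspaces, and a basis choice that makes $\Gamma_2$ exactly $X$ rather than a more general off-diagonal unitary. This is handled by defining the basis of $V_-$ as the image of the basis of $V_+$ under $\Gamma_2$. Finite dimensionality is used only there, and only mildly, since the same argument works with a Hilbert-space basis. Everything else is bookkeeping of indices.
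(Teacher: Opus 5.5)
Your proof is correct. The paper gives no proof of this statement. It recalls it as the standard Jordan--Wigner representation theorem and defers to the literature, where it is usually obtained algebraically. One way forms the fermionic modes $a_k=(\Gamma_{2k-1}+i\Gamma_{2k})/2$ and builds the Fock representation of the canonical anticommutation relations. The other uses the structure of the complex Clifford algebra: for even $m$ it is isomorphic to $M_{2^r}(\mathbb{C})$, so every representation is a multiple of the unique $2^r$-dimensional irreducible one.

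You instead give an elementary induction that peels off one qubit per pair of generators, and each step checks out:
\begin{itemize}
\item For the pair $(\Gamma_1,\Gamma_2)$, choosing the basis of $V_-$ as the $\Gamma_2$-image of a basis of $V_+$ yields exactly $Z\otimes\mathds 1$ and $X\otimes\mathds 1$.
\item Uniqueness of the Pauli block decomposition forces every later generator into the form $Y\otimes\Gamma'_x$, with the $\Gamma'_x$ Hermitian and again canonically anticommuting.
\item The index shift by two reproduces the exponents $(x-1)/2$ and $(x-2)/2$ and the prefactor $Y^{\otimes r}$, and the base cases $m=1,2$ are in place.
\end{itemize}

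Your approach is self-contained and constructive: it writes the unitary down explicitly and needs no representation theory. The algebraic viewpoint makes the role of $\Gamma'_m$ for odd $m$ more transparent. From your normal form, $\Gamma_1\Gamma_2\cdots\Gamma_{2r}=i^r\,Y^{\otimes r}\otimes\mathds 1$, so $\Gamma_1\cdots\Gamma_m=i^r\,\mathds 1\otimes\Gamma'_m$ is the central element. The two spectral projections of $\Gamma'_m$ therefore single out the two inequivalent irreducible representations, which are related by $\Gamma_m\mapsto-\Gamma_m$. This is where the PVMs $\{M_0,M_1\}$ and $\{N_0,N_1\}$, and the partial-transposition ambiguity, in the paper's self-testing argument come from. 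It follows from your result in one line, so nothing is missing, but it is worth stating if the theorem is to be used as the paper uses it.
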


\section{Bilocal bound}\label{appendix:bilocal_bound}
In this appendix, we recall the conditions on the ROCN matrix which guarantee that the resulting ROCN bilocal inequality exhibits a nontrivial gap between the bilocal (classical) and quantum bounds. 

\begin{theorem}\label{th:bilocal_bound}
	For any ROCN matrix $h\in\mathds{R}^{m \times n}$, the bilocal bound of the corresponding ROCN bilocal inequality~\eqref{eq:S_h_inequality} is
	\begin{equation}\label{eq:classical_bound}
		\beta^h_\mathrm{BL} = \max_{\vec{a} \in \{-1,1\}^m} \sum_{y = 1}^n \left| \sum_{x = 1}^m h_{xy} a_x \right|.
	\end{equation}
	In order to ensure a quantum violation of the bilocal Bell inequality, $\beta^h_\mathrm{BL} < \beta^h_\mathrm{Q}$, it is necessary and sufficient to impose the following condition on the matrix $h$:
	\begin{equation}\label{eq:non-triviality}
		\forall \vec{a} \in \{-1,1\}^m \; \exists y \in \{1,\ldots,n\}: \quad \sum_{x < z}^m h_{xy} h_{zy} a_x a_z \ne 0.
	\end{equation}
\end{theorem}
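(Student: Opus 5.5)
The plan has two halves: first compute the bilocal bound~\eqref{eq:classical_bound}, then compare it with $\beta^h_\mathrm{Q}=n$ from Theorem~\ref{th:quantum_bound}.

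\textbf{Upper bound on bilocal models.} Fix a bilocal model~\eqref{local}. For each $y$ we have
\begin{equation*}
I_y=\int\!\!\int \mathrm{d}\lambda_1\mathrm{d}\lambda_2\,\mu_1\mu_2\, \alpha_y(\lambda_1)\langle B_y\rangle_{\lambda_1,\lambda_2}\gamma_y(\lambda_2),
\end{equation*}
where $\alpha_y(\lambda_1)=\sum_x h_{xy}\langle A_x\rangle_{\lambda_1}$ and $\gamma_y(\lambda_2)=\sum_z h_{zy}\langle C_z\rangle_{\lambda_2}$. Since $|\langle B_y\rangle|\leqslant 1$ and the sources are independent,
\begin{equation*}
|I_y|\leqslant \bar\alpha_y\bar\gamma_y, \qquad \bar\alpha_y=\int \mu_1|\alpha_y|, \quad \bar\gamma_y=\int\mu_2|\gamma_y|.
\end{equation*}
Applying Cauchy--Schwarz over $y$ then gives
\begin{equation*}
\mathcal S_h\leqslant \sum_y\sqrt{\bar\alpha_y\bar\gamma_y}\leqslant \sqrt{\textstyle\sum_y\bar\alpha_y}\sqrt{\textstyle\sum_y\bar\gamma_y}.
\end{equation*}
Now $\sum_y\bar\alpha_y=\int\mu_1\sum_y|\sum_x h_{xy}\langle A_x\rangle_{\lambda_1}|$. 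The integrand is a convex function of the vector $(\langle A_x\rangle)\in[-1,1]^m$, so its maximum is attained at a vertex $\vec a\in\{-1,1\}^m$. Hence $\sum_y\bar\alpha_y\leqslant\beta^h_\mathrm{BL}$, the right-hand side of~\eqref{eq:classical_bound}, and the same holds for Charlie, which yields $\mathcal S_h\leqslant\beta^h_\mathrm{BL}$.

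\textbf{Attainability.} Choose $\vec a$ optimal and use deterministic strategies $A_x=C_x=a_x$. Let Bob output $b_y=\mathrm{sgn}(\sum_x h_{xy}a_x)$; this needs no $\lambda$'s at all. Then $I_y=(\sum_x h_{xy}a_x)^2$ up to sign, i.e.\ $|I_y|=(\sum_x h_{xy}a_x)^2$, and $\mathcal S_h=\sum_y|\sum_x h_{xy}a_x|$, so the bound is tight.

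\textbf{Violation criterion.} Set $v_y=\sum_x h_{xy}a_x$. By the ROCN conditions,
\begin{equation*}
\sum_y v_y^2=\sum_y\sum_x h_{xy}^2+2\sum_y\sum_{x<z}h_{xy}h_{zy}a_xa_z=n,
\end{equation*}
because the cross terms vanish after summing over $y$ (row orthogonality) and the column normalization gives $\sum_y\sum_x h_{xy}^2=n$. By Cauchy--Schwarz, $\sum_y|v_y|\leqslant\sqrt{n\sum_y v_y^2}=n$, with equality iff all $|v_y|$ are equal, and hence all equal to $1$. Now
\begin{equation*}
v_y^2=\sum_x h_{xy}^2+2\sum_{x<z}h_{xy}h_{zy}a_xa_z,
\end{equation*}
so $|v_y|=1$ for every $y$ is equivalent to $\sum_{x<z}h_{xy}h_{zy}a_xa_z=0$ for every $y$. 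This uses the column normalization a second time, which I take to mean $\sum_x h_{xy}^2=1$ for each column. Therefore $\beta^h_\mathrm{BL}=n$ exactly when some $\vec a$ violates~\eqref{eq:non-triviality}, which proves necessity and sufficiency.

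The main obstacle is the first step: handling the nonconvexity and keeping the $\lambda_1,\lambda_2$ averages separated. The bound $|I_y|\leqslant\bar\alpha_y\bar\gamma_y$, followed by Cauchy--Schwarz over $y$, is the step I expect to need care.
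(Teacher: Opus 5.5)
Your proposal is correct. For the bilocal upper bound you follow the paper's route:
\begin{itemize}
\item the triangle inequality with $|\langle B_y\rangle|\leqslant 1$ and source independence give $|I_y|\leqslant\bar\alpha_y\bar\gamma_y$;
\item Cauchy--Schwarz is then applied over $y$;
\item finally one reduces to deterministic $\vec a$.
\end{itemize}
For the last step, your convexity-on-the-cube argument is a cleaner justification than the paper's remark about absorbing local randomness into $\mu_1,\mu_2$.

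You go beyond the paper in two places.
\begin{itemize}
\item \emph{Attainability.} The paper never exhibits a bilocal strategy reaching~\eqref{eq:classical_bound}, even though the theorem asserts equality. Your deterministic strategy $A_x=C_x=a_x$ closes this. Bob's outputs are in fact irrelevant, since $|I_y|=v_y^2$ for any $b_y\in\{\pm1\}$. This also avoids the minor issue that $\mathrm{sgn}(0)$ is not a valid outcome.
\item \emph{The violation criterion.} The paper defers this to the bipartite analysis of Ref.~\cite{Michalski2025}. You derive it directly from the identity $\sum_y v_y^2=n$ (row orthogonality plus column normalization) and the equality case of Cauchy--Schwarz.
\end{itemize}

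Your route makes the theorem self-contained. It also shows transparently that the bilocal bound never exceeds $n$, and that it equals $n$ exactly when some sign vector makes every column's cross term vanish. The paper's route gains brevity by reusing the fact that the bilocal bound coincides with the bipartite ROCN local bound.
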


\begin{proof}
    For bilocal correlations, each combination $I_y$ admits the decomposition
    \begin{align}
        I_y = \int \mathrm{d}\lambda_1\, \mu_1(\lambda_1) \int \mathrm{d}\lambda_2\, \mu_2(\lambda_2) \sum_{x=1}^m\sum_{z=1}^m h_{xy} h_{zy} \langle A_x \rangle_{\lambda_1} \langle B_y \rangle_{\lambda_1,\lambda_2} \langle C_z \rangle_{\lambda_2}.
    \end{align}
    Using $|\langle B_y \rangle_{\lambda_1,\lambda_2}| \leqslant 1$ and applying the Cauchy-Schwarz inequality gives:
    \begin{align}
        |I_y| &\leqslant \int \mathrm{d}\lambda_1\, \mu_1(\lambda_1) \int \mathrm{d}\lambda_2\, \mu_2(\lambda_2) \abs{\sum_{x=1}^m h_{xy} \langle A_x \rangle_{\lambda_1}} \abs{\sum_{z=1}^m h_{zy} \langle C_z \rangle_{\lambda_2}} \nonumber\\
        &\leqslant \int \mathrm{d}\lambda_1\, \mu_1(\lambda_1) \abs{\sum_{x=1}^m h_{xy} \langle A_x \rangle_{\lambda_1}} \times \int \mathrm{d}\lambda_2\, \mu_2(\lambda_2) \abs{\sum_{z=1}^m h_{zy} \langle C_z \rangle_{\lambda_2}}.
    \end{align}
    As in the standard bipartite Bell scenario, any intrinsic randomness in $\langle A_x \rangle_{\lambda_1}$ and $\langle C_z \rangle_{\lambda_2}$ can be absorbed into $\mu_1(\lambda_1)$ and $\mu_2(\lambda_2)$, respectively. Therefore, without loss of generality, we may assume deterministic strategies $\langle A_x \rangle_{\lambda_1},\langle C_z \rangle_{\lambda_2}\in\{-1,+1\}$ for any $\lambda_1$, $\lambda_2$. 
    
    We now bound $\mathcal{S}_h$ using the Cauchy-Schwarz inequality:
    \begin{align}
        \mathcal{S}_h = \sum_{y=1}^n \sqrt{|I_y|} \leqslant \sqrt{\int \mathrm{d}\lambda_1\, \mu_1(\lambda_1) \sum_{y=1}^n \abs{\sum_{x=1}^m h_{xy} \langle A_x \rangle_{\lambda_1}}} \times \sqrt{\int \mathrm{d}\lambda_2\, \mu_2(\lambda_2) \sum_{y=1}^n \abs{\sum_{z=1}^m h_{zy} \langle C_z \rangle_{\lambda_2}}}.
    \end{align}
    Each integrand is bounded by the maximum value over deterministic assignments $\vec{a} \in \{-1,1\}^m$. Consequently, the bilocal bound reads:
    \begin{equation}
        \beta_\mathrm{BL} = \max_{\vec{a} \in \{-1,1\}^m} \sum_{y = 1}^n \left| \sum_{x = 1}^m h_{xy} a_x \right|.
    \end{equation}
    The condition for a quantum violation then follows directly from the analysis of bipartite ROCN Bell inequalities, as established in Ref.~\cite{Michalski2025}.
\end{proof}

It is worth noting that condition~\eqref{eq:non-triviality} can be explicitly phrased in terms of the matrix $M$ defined in Eq.~\eqref{eq:matrix_M}. Specifically, while Eq.~\eqref{eq:self-testing_condition} guarantees that only anticommuting observables can saturate the quantum bound, Eq.~\eqref{eq:classical_bound} ensures that this bound remains strictly unattainable by any bilocal deterministic strategy.

\section{Self-testing of the reference strategy}\label{appendix:self_testing_statement}

In this appendix, we recall the formal definition of self‑testing introduced in Ref.~\cite{Supic2020}. Self-testing represents the strongest form of device‑independent certification. It guarantees that a given correlation not only witnesses nonlocality, but also uniquely identifies---up to the unavoidable physical equivalences---a reference quantum strategy.

\begin{definition}[Bilocal self-testing]\label{def:bilocal_self-testing}
    Let $\vec{c} = \{p(a,b,c \vert x,y,z)\}$ be a quantum correlation generated by a state $\ket*{\tilde\psi}=\ket*{\tilde \psi}_{\mathrm{AB_1}}\otimes \ket*{\tilde \psi}_{\mathrm{B_2C}}\in\tilde{\mathcal H}_{\mathrm A} \otimes \tilde{\mathcal H}_{\mathrm{B}_1}\otimes \tilde{\mathcal H}_{\mathrm B_2}\otimes \tilde{\mathcal H}_{\mathrm C}$, and measurements  $\{\tilde A_x\}$, $\{\tilde B_y\}$ and $\{\tilde C_z\}$, with $x\in\{1,\ldots,n_\mathrm A\},\ y\in\{1,\ldots,n_\mathrm B\}$ and $z\in\{1,\ldots,n_\mathrm C\}$. We say that $\vec{c}$ self-tests this strategy if, for every state $\ket{\psi}_{\mathrm{AB_1}}\otimes\ket{\psi}_{\mathrm{B_2C}}\in\mathcal{H}_\mathrm{A}\otimes\mathcal{H}_{\mathrm{B}_1}\otimes\mathcal{H}_{\mathrm{B}_2}\otimes \mathcal{H}_\mathrm{C}$ and every observables $\{A_x\}$, $\{B_y\}$ and $\{C_z\}$ generating $\vec{c}$, there exist local isometries
        \begin{equation}\label{eq:self-testing_isometry_standard}   U_{\mathrm{A}}:\mathcal{H}_\mathrm{A}\to\tilde{\mathcal H}_{\mathrm A}\otimes\mathcal{H}_{\mathrm{A}'},\quad       U_{\mathrm{B}}:\mathcal{H}_\mathrm{B}\to(\tilde{\mathcal H}_{\mathrm B_1}\otimes\tilde{\mathcal H}_{\mathrm B_2})\otimes\mathcal{H}_{\mathrm{B}'},\quad        U_{\mathrm{C}}:\mathcal{H}_\mathrm{C}\to\tilde{\mathcal H}_{\mathrm C}\otimes\mathcal{H}_{\mathrm{C}'},
    \end{equation}
    and an auxiliary state $\ket{\xi}\in\mathcal{H}_{\mathrm{A}'}\otimes\mathcal{H}_{\mathrm{B}'}\otimes\mathcal{H}_{\mathrm{C}'}$ such that, for all $x,y,z$,
    \begin{equation}\label{eq:bilocal_self_testing_def_standard}
        \left(U_{\mathrm{A}}\otimes U_{\mathrm{B}}\otimes U_{\mathrm{C}}\right)
        \left(A_x\otimes B_y\otimes C_z\right)
        \left(\ket{\psi}_{\mathrm{AB_1}}\otimes\ket{\psi}_{\mathrm{B_2C}}\right)
        =
        \left(\tilde  A_x\otimes\tilde B_y\otimes\tilde C_z\right)\ket*{\tilde\psi}\otimes\ket{\xi},
    \end{equation}
    with $x\in\{0,\dots, n_\mathrm{A}\}$, $y\in\{0,\dots, n_\mathrm{B}\}$ and $z\in\{0,\dots, n_\mathrm{C}\}$. In particular, we employed the notation $A_0=B_0=C_0=\mathds 1$.
\end{definition}

Consider the reference strategy from Definition~\ref{def:reference_strategy}, where $n_{\mathrm A}=n_\mathrm{C}=m$ and $n_\mathrm B=n$. Here, Alice, Bob, and Charlie share two independent copies of the maximally entangled state~\eqref{eq:me_state} distributed across the bipartitions $(\tilde{\mathcal{H}}_{\mathrm A}, \tilde{\mathcal{H}}_{\mathrm B_1})$ and $(\tilde{\mathcal{H}}_{\mathrm B_2},\tilde{\mathcal{H}}_{\mathrm C})$. Alice's and Charlie's observables are defined as
\begin{equation}
	\forall x\in\{1,\dots, m\}:\quad \tilde A_x=\tilde C_x=\Gamma_x,
\end{equation} 
where the operators $\Gamma_x$ act on the Hilbert space $\mathbb C^d\equiv\mathbb C^{2\otimes r}$ and are given by Eqs.~\eqref{eq:U} and~\eqref{eq:last_operator_JW} of Appendix~\ref{sec:jordan_wigner} (for $\mathcal H'=\mathbb C$ and $\Gamma_m'=1$). Bob's measurements are defined according to Eq.~\eqref{eq:bob_observables}. 

Now, suppose $m$ is odd. If we flip the sign of the final operator, $\tilde A_m\rightarrow -\tilde A_m$, Alice's modified set of observables $\{\tilde A_1,\dots,\tilde A_{m-1},-\tilde A_m\}$ still satisfies the anticommutation relations. This sign-flip transformation can be implemented by applying a partial transposition $\Phi$ to the last qubit of Alice's subspace. Crucially, if the same transformation is simultaneously applied to Bob's first Hilbert space $\tilde{\mathcal{H}}_{\mathrm B_1}$, the overall observed correlations $\{p(a,b,c|x,y,z)\}$ remain unchanged. 

It was shown in Ref.~\cite{Michalski2025} that the original reference strategy and its partially transposed counterpart cannot be related to one another via a unitary transformation---nor via complex conjugation when $r=\lfloor \frac{m}{2}\rfloor$ is even. Consequently, when $m$ is odd, the reference strategy cannot be self-tested in the canonical sense of Definition~\ref{def:bilocal_self-testing}, but only up to a partial transposition. Specifically, Eq.~\eqref{eq:bilocal_self_testing_def_standard} must be replaced with
\begin{equation}\label{eq:bilocal_self_testing_def}
    \left(U_{\mathrm{A}}\otimes U_{\mathrm{B}}\otimes U_{\mathrm{C}}\right)
    \left(A_x\otimes B_y\otimes C_z\right)
    \left(\ket{\psi}_{\mathrm{AB_1}}\otimes\ket{\psi}_{\mathrm{B_2C}}\right)
    =
    \left(\bar A_x\otimes\bar B_y\otimes\bar C_z\right)\ket*{\tilde\psi}\otimes\ket{\xi}.
\end{equation}
The observables $\bar A_x$ and $\bar C_z$ take the form
\begin{equation}
    \bar A_x=\tilde A_x\otimes M_0+\Phi(\tilde A_x)\otimes M_1,
    \quad
    \bar C_z=\tilde C_z\otimes N_0+\Phi(\tilde C_z)\otimes N_1,
\end{equation}
while $\bar B_y$ decomposes as
\begin{equation}
    \bar B_y =\tilde B_y \otimes P_0  Q_0+(\Phi\otimes \mathds 1)(\tilde B_y) \otimes P_1 Q_0+ (\mathds 1\otimes\Phi)(\tilde B_y) \otimes P_0 Q_1+(\Phi\otimes \Phi)(\tilde B_y) \otimes P_1Q_1.
\end{equation}
where $\Phi$ denotes the additional ROCN equivalence map introduced in Ref.~\cite{Michalski2025}. The operators $\{M_0,M_1\}$, $\{N_0,N_1\}$, $\{P_0,P_1\}$ and $\{Q_0,Q_1\}$ form binary projection-valued measures (PVMs) satisfying
\begin{equation}
    \mel{\xi}{M_0\otimes P_0\otimes \mathds 1+M_1\otimes P_1\otimes \mathds 1}{\xi}=1,\quad \mel{\xi}{\mathds 1\otimes Q_0\otimes N_0 +\mathds 1\otimes Q_1\otimes N_1}{\xi}=1.
\end{equation}

\section{Proof of self-testing}\label{sec:proof_self_testing}
\begin{proof}[Proof of Theorem~\ref{th:self_testing_statement}]
We prove that condition~\eqref{eq:self-testing_condition} implies that the maximal violation of the ROCN bilocal inequality~\eqref{eq:S_h_inequality} and its linear counterpart~\eqref{eq:rocn_bell_inequality} self-tests the reference strategy. The converse direction follows from the considerations presented in Ref.~\cite[Lemma 1]{Michalski2025}. The proof is organized into three parts. In \emph{Part~I}, we show that achieving the maximal quantum violation of the nonlinear inequality forces Alice's and Charlie's observables to form pairwise anticommuting families. In \emph{Part~II}, we employ vectorization techniques to establish relations among Alice's, Charlie's, and Bob's observables, as well as among the Schmidt coefficients of the shared quantum states. The linear inequality is subsequently used to determine the sign of Bob's observables. Finally, in \emph{Part~III}, we invoke the Jordan--Wigner representation theorem to demonstrate self-testing of the reference strategy.
    
    \emph{Part~I.} Suppose that a strategy $\{\ket{\psi}, A_x, B_y, C_z\}$ achieves the maximal quantum violation of the considered ROCN inequalities~\eqref{eq:S_h_inequality} and~\eqref{eq:rocn_bell_inequality}. We assume without loss of generality~\cite{Baptista2023} that all observables $A_x$, $B_y$, and $C_z$ square to the identity and that the shared state $\ket{\psi}$ is pure. For the given ROCN matrix $h$, we introduce again the shorthand notation~\eqref{eq:A_C_slashed}.  Consider the operators
    \begin{equation}
       N_{y}= \slashed{A}_y \otimes \mathds{1} \otimes \mathds{1} - \mathrm{sgn}(I_y)\, \mathds{1} \otimes B_y \otimes \slashed{C}_y,
    \end{equation}
    with $I_y$ defined as in Eq.~\eqref{eq:I_y}, taking the expectation value of $\mathcal{B}_y$ in the state $\ket{\psi}$. A direct computation using the ROCN conditions for the matrix $h$ yields
    \begin{align}\label{eq:self-testing_SOS}
        \frac{1}{2} \sum_{y=1}^n N_y^\dagger N_y &= n \mathds{1} \otimes \mathds{1} \otimes \mathds{1} - \sum_{y=1}^n \mathrm{sgn}(I_y)\, \mathcal{B}_y \geqslant 0.
    \end{align}
    Taking the expectation value in the state $\ket{\psi}$ gives
    \begin{equation}
        \sum_{y=1}^n |I_y| = \sum_{y=1}^n \mathrm{sgn}(I_y)\, \langle \mathcal{B}_y \rangle \leqslant n.
    \end{equation}
    From the SOS decomposition~\eqref{eq:self-testing_SOS} and the saturation condition for inequality~\eqref{eq:Jensen}, it follows that the maximal quantum value $\beta_Q^{h} = n$ is attained if and only if the shared state $\ket{\psi}$ satisfies $N_y\ket{\psi}=0$ for all $y\in\{1,\dots, n\}$ and all absolute values $|I_y|$ coincide, i.e., $|I_1| = |I_2| = \ldots = |I_n| = 1$. As a consequence, for all $y\in\{1,\dots, n\}$ we have
    \begin{align}\label{eq:saturation_network_bound}
        \left(\slashed{A}_y^2 \otimes \mathds{1} \otimes \mathds{1} \right) \ket{\psi} = \mathrm{sgn}(I_y)\, \left( \slashed{A}_y \otimes B_y \otimes \slashed{C}_y \right) \ket{\psi} = \left(  \mathds{1} \otimes \mathds{1} \otimes \slashed{C}_y^2 \right) \ket{\psi}.
    \end{align}
    Acting from the left with $\bra{\psi}$ gives
    \begin{align}
        \bra{\psi} \left(\slashed{A}_y^2 \otimes \mathds{1} \otimes \mathds{1} \right) \ket{\psi} = \bra{\psi} \left( \mathds{1} \otimes \mathds{1} \otimes \slashed{C}_y^2 \right) \ket{\psi} = 1.
    \end{align}
    Expanding the operators $\slashed{A}_y$ and $\slashed{C}_y$, using $A_x^2 = C_z^2 = \mathds{1}$, and applying the ROCN conditions for the matrix $h$ yields
    \begin{equation}
        \sum_{x<z}^m h_{xy} h_{zy} \ev{\{A_x, A_z\}}{\psi} = \sum_{x<z}^m h_{xy} h_{zy} \ev{\{C_x, C_z\}}{\psi} = 0.
    \end{equation}
    Generalizing the discussion in Ref.~\cite[Appendix C]{Michalski2025}, we can additionally assume that the reduced density matrices $\rho_{\mathrm{A}}$, $\rho_{\mathrm{B}_1}$, $\rho_{\mathrm{B}_2}$, and $\rho_\mathrm{C}$ are full rank.
    Under this assumption, the above relation implies
    \begin{align}\label{eq:eqnset}
        \sum_{x<z}^m h_{xy} h_{zy} \{A_x, A_z\} = \sum_{x<z}^m h_{xy} h_{zy} \{C_x, C_z\} = 0.
    \end{align}
    If the matrix $M$ defined in Eq.~\eqref{eq:matrix_M} has full column rank, then Eq.~\eqref{eq:eqnset} enforces that Alice's and Charlie's observables satisfy the canonical anticommutation relations.

    \emph{Part~II.} Due to source independence, the total shared quantum state $\ket{\psi}$ can be decomposed as a product $\ket{\psi} = \ket{\psi}_{\mathrm{AB_1}} \otimes \ket{\psi}_{\mathrm{B_2C}}$. Let us express the state $\ket{\psi}_{\mathrm{AB_1}}$ shared between Alice and Bob in its Schmidt decomposition:
    \begin{equation}
        \ket{\psi}_{\mathrm{AB_1}} = \sum_{k=1}^{D_\mathrm{A}} \lambda_k \ket{k} \otimes \ket{k} =  (P_\mathrm{A} \otimes \mathds{1}) \ket{\Phi_{D_\mathrm{A}}},
    \end{equation}
    where $P_\mathrm{A} = \sqrt{D_\mathrm{A}} \sum_{k=1}^{D_\mathrm{A}} \lambda_k \ketbra{k}$, $D_\mathrm{A}=\dim \mathcal{H}_{\mathrm{A}}=\dim\mathcal{H}_{\mathrm{B}_1}$ and
    \begin{equation}\label{eq:me_state_AB1}
        \ket{\Phi_{D_\mathrm{A}}} =\frac{1}{\sqrt{D_\mathrm{A}}} \sum_{k=1}^{D_\mathrm{A}} \ket{k}\otimes\ket{k}
    \end{equation}
    is the maximally entangled state between $\mathcal{H}_\mathrm{A}$ and $\mathcal{H}_{\mathrm{B}_1}$. Observe that, since we assumed the reduced density matrices $\rho_{\mathrm{A}}$ and $\rho_{\mathrm{B}_1}$ to be full rank, the Hilbert spaces $\mathcal{H}_{\mathrm{A}}$ and $\mathcal{H}_{\mathrm{B}_1}$ have the same dimensions. Moreover, we choose the basis so that it coincides with the eigenbasis of both $\rho_{\mathrm{A}}$ and $\rho_{\mathrm{B}_1}$.

    Similarly, we decompose the state $\ket{\psi}_{\mathrm{B_2C}}$ as
    \begin{equation}
        \ket{\psi}_{\mathrm{B_2C}} = (\mathds{1} \otimes P_\mathrm{C}) \ket{\Phi_{D_\mathrm{C}}},
    \end{equation}
    where $P_\mathrm{C} = \sqrt{D_\mathrm{C}} \sum_{k=1}^{D_\mathrm{C}} \mu_k \ketbra{k}$, $D_\mathrm{C}=\dim \mathcal{H}_{\mathrm{C}}=\dim\mathcal{H}_{\mathrm{B}_2}$, and $\ket{\Phi_{D_\mathrm{C}}}$ is the maximally entangled state between $\mathcal{H}_\mathrm{C}$ and $\mathcal{H}_{\mathrm{B}_2}$, defined analogously to Eq.~\eqref{eq:me_state_AB1}.
    
    The anticommutation relations and the ROCN conditions for the matrix $h$ imply that $\slashed{A}_y^2 = \slashed{C}_y^2 = \mathds{1}$ for all $y\in\{1,\dots,n\}$. Furthermore, we can apply the operator $\slashed{A}_y \otimes \mathds{1} \otimes \slashed{C}_y$ to both sides of Eq.~\eqref{eq:saturation_network_bound} and use the flip identity from Eq.~\eqref{eq:flip-flop} to obtain
    \begin{equation}
        \left[\mathds{1} \otimes B_y (P_\mathrm{A} \otimes P_\mathrm{C}) \otimes \mathds{1}\right] \ket{\Phi_{D_\mathrm{A}}} \otimes \ket{\Phi_{D_\mathrm{C}}}\\
        = \mathrm{sgn}(I_y) \left[\mathds{1} \otimes \left(P_\mathrm{A} \slashed{A}^\mathsf{T}_y \otimes P_\mathrm{C} \slashed{C}^\mathsf{T}_y\right) \otimes \mathds{1} \right] \ket{\Phi_{D_\mathrm{A}}} \otimes \ket{\Phi_{D_\mathrm{C}}},
    \end{equation}
    Projecting both sides onto $\ket{\ell_\mathrm{A}} \otimes \left(\ket{k_\mathrm{A}} \otimes \ket{k_\mathrm{C}}\right) \otimes \ket{\ell_\mathrm{C}}$ for all $k_\mathrm{A},\ell_\mathrm{A} \in \{1, \ldots, D_\mathrm{A}\}$ and $k_\mathrm{C},\ell_\mathrm{C} \in \{1, \ldots, D_\mathrm{C}\}$ yields
    \begin{equation}
        \bra{k_\mathrm{A}} \otimes \bra{k_\mathrm{C}} B_y (P_\mathrm{A} \otimes P_\mathrm{C}) \ket{\ell_\mathrm{A}} \otimes \ket{\ell_\mathrm{C}}
        = \mathrm{sgn}(I_y)\bra{k_\mathrm{A}} \otimes \bra{k_\mathrm{C}} \left(P_\mathrm{A} \slashed{A}^\mathsf{T}_y \otimes P_\mathrm{C} \slashed{C}^\mathsf{T}_y\right) \ket{\ell_\mathrm{A}} \otimes \ket{\ell_\mathrm{C}}
    \end{equation}
    and hence, on the subspace available to Bob
    \begin{equation}
        B_y (P_\mathrm{A} \otimes P_\mathrm{C}) = \mathrm{sgn}(I_y)(P_\mathrm{A} \otimes P_\mathrm{C}) \left(\slashed{A}^\mathsf{T}_y \otimes \slashed{C}^\mathsf{T}_y\right).
    \end{equation}
    Taking the Hermitian conjugate gives
    \begin{equation}
        (P_\mathrm{A} \otimes P_\mathrm{C}) B_y = \mathrm{sgn}(I_y)\left(\slashed{A}^\mathsf{T}_y \otimes \slashed{C}^\mathsf{T}_y\right) (P_\mathrm{A} \otimes P_\mathrm{C}).
    \end{equation}
    Combining these two equalities, we find
    \begin{equation}
         (P_\mathrm{A} \otimes P_\mathrm{C})  B_y B_y  (P_\mathrm{A} \otimes P_\mathrm{C}) = (P_\mathrm{A} \otimes P_\mathrm{C})^2
         = \left(\slashed{A}^\mathsf{T}_y \otimes \slashed{C}^\mathsf{T}_y\right) (P_\mathrm{A} \otimes P_\mathrm{C})^2 \left(\slashed{A}^\mathsf{T}_y \otimes \slashed{C}^\mathsf{T}_y\right),
    \end{equation}
    which implies that $[P_\mathrm{A}^2 \otimes P_\mathrm{C}^2, \slashed{A}^\mathsf{T}_y \otimes \slashed{C}^\mathsf{T}_y] = 0$. Since $P_\mathrm{A} \otimes P_\mathrm{C}$ is positive, it follows that $[P_\mathrm{A} \otimes P_\mathrm{C}, \slashed{A}^\mathsf{T}_y \otimes \slashed{C}^\mathsf{T}_y] = 0$, and we obtain
    \begin{equation}\label{eq:operators_bob}
        B_y = \mathrm{sgn}(I_y) \slashed{A}^\mathsf{T}_y \otimes \slashed{C}^\mathsf{T}_y = \mathrm{sgn}(I_y) \sum_{x=1}^m \sum_{z=1}^m h_{xy} h_{zy} A_x^\mathsf{T} \otimes C_z^\mathsf{T}.
    \end{equation}
    This demonstrates that the structure of Bob's observables is fully specified by $A_x$ and $C_z$, apart from an overall sign. This residual freedom is removed by requiring maximal violation of the linear expression~\eqref{eq:rocn_bell_inequality}. Because the combination $\mathcal{L}_h = \sum_y I_y$ attains the maximal quantum value $\beta_Q^h = n$, it follows that $I_1 = I_2 = \ldots = I_n = 1$. 
    
    Finally, the commutation relation $[P_\mathrm{A} \otimes P_\mathrm{C}, \slashed{A}^\mathsf{T}_y \otimes \slashed{C}^\mathsf{T}_y] = 0$ can be rewritten as
    \begin{equation}\label{eq:commutation_rewritten}
        [P_\mathrm{A}, \slashed{A}^\mathsf{T}_y] \otimes  P_\mathrm{C} \slashed{C}^\mathsf{T}_y + \slashed{A}^\mathsf{T}_y P_\mathrm{A} \otimes  [P_\mathrm{C}, \slashed{C}^\mathsf{T}_y] = 0.
    \end{equation}
    Multiplying the operator equation by $\mathds{1} \otimes \slashed{C}_y^\mathsf{T}$, taking the partial trace over $\mathcal{H}_{\mathrm{B}_2}$, and using $\slashed{C}_y^2 = \mathds{1}$ results in
    \begin{equation}
        [P_\mathrm{A}, \slashed{A}^\mathsf{T}_y] \Tr(P_\mathrm{C}) = 0.
    \end{equation}
    Since $P_\mathrm{C}$ is positive definite, by substituting the definition~\eqref{eq:A_C_slashed}, we obtain
    \begin{equation}
        \sum_{x=1}^m h_{xy} [P_\mathrm{A}, A^\mathsf{T}_x] = 0,
    \end{equation}
    for every $y\in\{1,\dots,n\}$. Multiplying this equation by $h_{zy}$ and summing over $y$ leads to
    \begin{equation}
        \forall z\in\{1,\dots,m \}:\quad \sum_{y=1}^n h_{zy}^2\  [P_\mathrm{A}, {A}_z^\mathsf{T}]=0 \implies [P_\mathrm{A}, {A}_z^\mathsf{T}]=0.
    \end{equation}
    Similarly, multiplying Eq.~\eqref{eq:commutation_rewritten} by $\slashed{A}_y^\mathsf{T} \otimes \mathds{1}$, taking the partial trace over $\mathcal{H}_{\mathrm{B}_1}$, and performing analogous steps, we get $[P_\mathrm{C}, C_z^\mathsf{T}] = 0$ for all $z\in\{1,\dots,m\}$.
    
    Altogether, we have established that the sets of Alice's and Charlie's observables each satisfy the canonical anticommutation relations, Bob's observables take the product form $\slashed{A}_y^\mathsf{T} \otimes \slashed{C}_y^\mathsf{T}$, and the operators $P_{\mathrm{A}}$ and $P_\mathrm{C}$ commute with all $A_x$ and $C_z$, respectively.
    
    \emph{Part~III.} Since Alice's and Charlie's observables satisfy the canonical anticommutation relations, by the Jordan--Wigner representation theorem there exist local unitaries such that
    \begin{equation}
        \mathcal{H}_{\mathrm{A}} \cong \bigotimes_{k=1}^r \mathbb{C}^{2}\otimes\mathcal{H}_{\mathrm{A}'},
        \qquad
        \mathcal{H}_{\mathrm{C}} \cong \bigotimes_{k=1}^r \mathbb{C}^{2}\otimes\mathcal{H}_{\mathrm{C}'},
    \end{equation}
    with $r=\lfloor m/2\rfloor$, and the observables take the canonical form
    \begin{align}
        \forall x,z\in\{0,1,\dots,2r\}:\quad A_x = \tilde A_x \otimes \mathds{1}_{\mathcal{H}_{\mathrm{A}'}},
        \qquad
        C_z = \tilde C_z \otimes \mathds{1}_{\mathcal{H}_{\mathrm{C}'}},
    \end{align}
    and for $m$ odd:
    \begin{equation}
        A_m=\tilde A_m \otimes M_0 +  \Phi(\tilde A_m) \otimes M_1, \quad C_m=\tilde C_m\otimes N_0 + \Phi(\tilde C_m) \otimes N_1
    \end{equation}
    for some $M_0+M_1=\mathds 1$ and $N_0+N_1=\mathds 1$ orthogonal measurements.

    From Part~II, $[P_\mathrm{A},A_x]=0$ and $[P_\mathrm{C},C_z]=0$ for all $x,z$. Since the first $2r$ observables $A_x$ and $C_z$ generate the full algebra $\mathcal B(\mathbb{C}^{2 \otimes r})\otimes \mathds{1}$, we obtain
    \begin{equation}
        P_\mathrm{A}=\mathds{1}_{2}^{\otimes r} \otimes P_{\mathrm{A}'},
        \qquad
        P_\mathrm{C}=\mathds{1}_{2}^{\otimes r} \otimes P_{\mathrm{C}'}.
    \end{equation}
    Consequently, each source state factorizes into a maximally entangled $2^r\times 2^r$ component and an ancillary component:
    \begin{equation}\label{eq:tensor_product_form}
        \ket{\psi}_{\mathrm{AB_1}} = \ket{\Phi_{2^r}}_{\mathrm{AB_1}}\otimes\ket{\xi}_\mathrm{A'B'_1},
        \qquad
        \ket{\psi}_{\mathrm{B_2C}} = \ket{\Phi_{2^r}}_{\mathrm{B_2C}}\otimes\ket{\xi}_{\mathrm{B'_2C'}}.
    \end{equation}
    Using Eq.~\eqref{eq:operators_bob}, Bob's observables act on the support of the state as
    \begin{align}
        B_y&=\sum_{x=1}^m\sum_{z=1}^m h_{xy}h_{zy}\,A_x^{\mathsf T}\otimes C_z^{\mathsf T}\nonumber\\
        &= \tilde B_y \otimes M_0^\mathsf{T}  N_0^\mathsf{T}+(\Phi\otimes \mathds 1)(\tilde B_y) \otimes M_1^\mathsf{T}  N_0^\mathsf{T}+ (\mathds 1\otimes\Phi)(\tilde B_y) \otimes M_0^\mathsf{T} N_1^\mathsf{T}+(\Phi\otimes \Phi)(\tilde B_y) \otimes M_1^\mathsf{T} N_1^\mathsf{T},
    \end{align}
    where $\tilde B_y$ is exactly the reference bilocal observable.
\end{proof}

\section{New family of bilocal ROCN Bell inequalities}\label{appendix:new_family}

Consider a family of $3 \times 4$ ROCN matrices generated from the matrix $h \vcentcolon= h_\mathrm{EBI}$, introduced in the main text, through transformations of the form $h \mapsto R h$, where $R$ is a $3 \times 3$ real matrix. Our aim is to find an ROCN matrix with the lowest possible bilocal bound $\beta^{Rh}_\mathrm{BL}$. However, first, we need to identify the structural conditions on the considered transformation required to preserve the ROCN properties. The orthogonality of rows implies
\begin{equation}
    R h h^\mathsf{T} R^\mathsf{T} = \mathrm{diag}(\lambda_1, \lambda_2, \lambda_3), 
\end{equation}
where $\lambda_1$, $\lambda_2$ and $\lambda_3$ are real numbers corresponding to the squared row lengths of the transformed matrix. Since $h h^\mathsf{T} = \mathds{1}_3$, this condition implies that $RR^\mathsf{T}$ must be diagonal. Next, normalization of the columns is equivalent to
\begin{equation}
    (h^\mathsf{T} R^\mathsf{T} R h)_{yy} = \sum_{x = 1}^{3} \sum_{z = 1}^{3} (h^\mathsf{T})_{yx} (R^\mathsf{T} R)_{xz} h_{zy} = 1,
\end{equation}
for all $y \in \{1,\ldots,4\}$. Denoting the $3 \times 3$ symmetric matrix $R^\mathsf{T} R$ by $S$, and using the explicit form of $h$, this condition can be rewritten as
\begin{equation}
    \sum_{x = 1}^{3} \sum_{z = 1}^{3} h_{xy} h_{zy} S_{xz} = \frac{1}{3} \Tr S + 2 \sum_{x<z}^{3} h_{xy} h_{zy} S_{xz} = 1.
\end{equation}
Summing over $y$ and using the ROCN properties of $h$, we obtain
\begin{equation}
    \frac{4}{3} \Tr S + 2 \sum_{x<z}^{3} \sum_{y=1}^4 h_{xy} h_{zy} S_{xz} = \frac{4}{3} \Tr S + 2 \sum_{x<z}^{3} \frac{4}{3} \delta_{xz} S_{xz} = \frac{4}{3} \Tr S = 4,
\end{equation}
which implies that $\Tr S = \lambda_1 + \lambda_2 + \lambda_3 = 3$. We are therefore left with a homogeneous system of equations for the off-diagonal entries of $S$:
\begin{equation}
    \sum_{x<z}^{3} h_{xy} h_{zy} S_{xz} = 0.
\end{equation}
Since the matrix $M$ defined by $M_{y,(x,z)} = h_{xy} h_{zy}$, with $y \in \{1,\ldots,4\}$ and $x,z\in\{1,\ldots,3\}$ such that $x<z$, has full column rank, it follows that all off-diagonal entries of $S$ must vanish. 

To sum up, a transformation $h \mapsto R h$ preserves the ROCN properties if the following conditions hold: (i) $RR^\mathsf{T}$ is diagonal, (ii) $R^\mathsf{T}R$ is diagonal, and (iii) the squared Frobenius norm of $R$ satisfies $\Tr RR^\mathsf{T} = \Tr R^\mathsf{T}R = 3$. The parametrization of $R$ then splits into three distinct cases, depending on the multiplicities of the squared row lengths $\lambda_1$, $\lambda_2$ and $\lambda_3$. For our purposes, however, an explicit parametrization is not needed. Numerical optimization over all matrices $R$ satisfying conditions (i)--(iii) shows that, up to permutations and reflections, the lowest bilocal bound, $\beta^\mathrm{min}_\mathrm{BL} = 2\sqrt{2}$, is achieved by the following ROCN matrix:
\begin{equation}
    h_\mathrm{min} = \frac{1}{\sqrt{2}} \begin{bmatrix}
        -1 & +1 & -1 & +1\\
        0 & -1 & 0 & +1\\
        -1 & 0 & +1 & 0
    \end{bmatrix}.
\end{equation}
However, this matrix has a serious drawback: it does not satisfy the self-testing condition introduced in Theorem~\ref{th:self_testing_statement}. To circumvent this problem, it is enough to apply a perturbation of the form $h_\mathrm{min} \mapsto R_z(\theta) h_\mathrm{min}$, where $\theta \in \left]0,\pi/2\right[$ and the $3 \times 3$ rotation matrix $R_z(\theta)$ is given by
\begin{equation}
    R_z(\theta) = \begin{bmatrix}
        1 & 0 & 0\\
        0 & \cos \theta & - \sin \theta\\
        0 & \sin \theta & \cos \theta
    \end{bmatrix}.
\end{equation}
The resulting one-parameter family of ROCN matrices $h_\theta \vcentcolon= R_z(\theta) h_\mathrm{min}$ takes the form
\begin{equation}
    h_\theta = \frac{1}{\sqrt{2}} \begin{bmatrix}
        -1 & +1 & -1 & +1\\
        +\sin \theta & -\cos \theta & -\sin \theta & +\cos \theta\\
        -\cos \theta & -\sin \theta & +\cos \theta & +\sin \theta
    \end{bmatrix}.
\end{equation}
In the limit $\theta \to 0$, the bilocal bound can be made arbitrarily close to $\beta_\mathrm{BL}^\mathrm{min}$. Furthermore, the corresponding matrix $M$ takes the form
\begin{equation}
    M_\theta = \frac{1}{2} \begin{bmatrix}
        -\sin \theta & +\cos \theta & -\sin \theta \cos \theta\\
        -\cos \theta & -\sin \theta & +\sin \theta \cos \theta\\
        +\sin \theta & -\cos \theta & -\sin \theta \cos \theta\\
        +\cos \theta & +\sin \theta & +\sin \theta \cos \theta
    \end{bmatrix}.
\end{equation}
The determinant of the $3\times 3$ submatrix formed by the first three rows is proportional to $\sin 2\theta$, which is nonzero for $\theta \in \left]0,\pi/2\right[$. Thus, $M_\theta$ has full column rank and the ROCN bilocal inequalities generated by the one-parameter family $h_\theta$ enable self-testing throughout this range.

\end{document}